\documentclass[a4paper,10pt]{article}
\usepackage{amsmath,amssymb,amstext,amsfonts}
\usepackage{amsthm, mathrsfs}
\usepackage{graphicx,subfigure}
\usepackage{url}
\usepackage{authblk}
\usepackage{natbib}
\usepackage{dsfont} % for the indicator function symbol

% General commands
 \newcommand{\sgn}{\operatorname{sgn}}

% For the 'd' in 'dx' we use a straight d
\newcommand{\ud}{\, \mathrm{d}}

% General functions
\newcommand{\indicatorfn}{\mathds{1}}

% Market model

% For vector-valued variables, we use bold fonts
%   small letters

%  capital letters

% with the transpose added:
%   small letters

% For the probability space

\newcommand{\Omegaset}{\Omega}
\newcommand{\timehorizon}{T}
\newcommand{\timeset}{[0,\timehorizon]}

\newcommand{\filtration}{\mathcal{F}}

\newcommand{\Pmeasure}{\mathbb{P}}
\newcommand{\Qmeasure}{\mathbb{Q}}

\newcommand{\expectation}{\textrm{E}}

\newcommand{\expectationQ}{\expectation^{\Qmeasure}}

\newcommand{\measurablespace}{\Omegaset,\filtration}

\newcommand{\probspace}{(\Omegaset,\filtration,\Pmeasure)}

\newcommand{\filtprobspace}{(\Omegaset,\filtration,\Pmeasure, \{ \filtration_{t} \})}

% For the Brownian motion
\newcommand{\BMdim}{N}
\newcommand{\BM}{W}
\newcommand{\vBM}{\mathbf{W}}

% for the corlol Markov process
\newcommand{\markovspace}{I}
\newcommand{\Markovgenerator}{G}
\newcommand{\markovgenerator}{g}

\newcommand{\markovchain}{\alpha}
\newcommand{\markovdim}{D}
\newcommand{\markovinitialstate}{i_{0}}

\newcommand{\Pmarkovgenerator}{\markovgenerator}

\newcommand{\markovmartingale}{M}

\newcommand{\markovmartingaleij}{\markovmartingale_{ij}}
\newcommand{\markovsquareqvprocess}{N}
\newcommand{\markovcompensator}{\lambda}

% For spaces

\newcommand{\realnumbers}{\mathbb{R}}
\newcommand{\realnumbersN}{\realnumbers^{\BMdim}}

% Stock price processes
\newcommand{\bankaccount}{S_{0}}
\newcommand{\stockprice}{S}
\newcommand{\vstockprice}{\mathbf{\stockprice}}
\newcommand{\sharperatio}{(SR)}
\newcommand{\geninstrumentprice}{\Pi}
\newcommand{\geninstrumentmeanret}{\mu^{\geninstrumentprice}}
\newcommand{\geninstrumentvolBM}{\sigma^{\geninstrumentprice}}
\newcommand{\vgeninstrumentvolBM}{\pmb{\sigma}^{\geninstrumentprice}}
\newcommand{\geninstrumentvolMC}{\gamma^{\geninstrumentprice}}
\newcommand{\vgeninstrumentvolMC}{\pmb{\gamma}^{\geninstrumentprice}}

\newcommand{\geninstrumentvolatility}{\nu}
\newcommand{\meanret}{b}
\newcommand{\vmeanret}{\mathbf{\meanret}}
\newcommand{\vol}{\sigma}
\newcommand{\vvol}{\pmb{\vol}}
\newcommand{\vvolT}{\vvol^{\top}}

% Measure changes
\newcommand{\RNderivative}{L}
\newcommand{\RNforBM}{h}
\newcommand{\vRNforBM}{\mathbf{\RNforBM}}
\newcommand{\vRNforBMT}{\mathbf{\RNforBM}^{\top}}
\newcommand{\RNforMC}{\eta}
\newcommand{\vRNforMC}{\pmb{\eta}}

\newcommand{\RNforMCopt}{\bar{\RNforMC}}
\newcommand{\vRNforMCopt}{\bar{\vRNforMC}}
\newcommand{\Qmarkovgeneratoropt}{\RNforMCopt}

\newcommand{\vPBM}{\vBM}

\newcommand{\vQBM}{\vBM^{\Qmeasure}}

\newcommand{\Pmarkovmartingaleij}{\markovmartingale_{ij}}

\newcommand{\Qmarkovmartingale}{\markovmartingale^{\Qmeasure}}

\newcommand{\RNforMCmmm}{\RNforMC^{\textrm{min}}}

\newcommand{\vRNforBMmmm}{\vRNforBM^{\textrm{min}}}
\newcommand{\vRNforMCmmm}{\vRNforMC^{\textrm{min}}}
\newcommand{\Qmeasuremmm}{\Qmeasure^{\textrm{min}}}

\newcommand{\hilbertspace}{\mathcal{H}}

% The problem
\newcommand{\contingentclaim}{Z}
\newcommand{\contingentfn}{\Phi}
\newcommand{\gooddealbnd}{B}

% Stochastic control approach
\newcommand{\infinitoperator}{\mathbb{A}}
\newcommand{\infinitoperatorBMMC}{\infinitoperator^{(\vRNforBM, \vRNforMC)}}
\newcommand{\infinitoperatorMC}{\infinitoperatorBMMC}
\newcommand{\infinitoperatorMCopt}{\infinitoperator^{(\vRNforBM, \vRNforMCopt)}}
\newcommand{\infinitoperatormmm}{\infinitoperator^{(\vRNforBMmmm, \vRNforMCmmm)}}
\newcommand{\valuefn}{V}
\newcommand{\valuefnupper}{\valuefn}
\newcommand{\valuefnlower}{\valuefn}
\newcommand{\valuefnupperii}{\valuefn^{\textrm{upper}}}
\newcommand{\valuefnlowerii}{\valuefn^{\textrm{lower}}}
\newcommand{\valuefnmmm}{V^{\textrm{min}}}

% European call option
\newcommand{\strikeprice}{K}
\newcommand{\genconstraint}{\tilde{B}}

\newcommand{\HJ}{Hansen-Jagannathan }
\newcommand{\gd}{good-deal }

% Theorem styles
\theoremstyle{plain}
\newtheorem{thm}{Theorem}[section]

\newtheorem{lma}[thm]{Lemma}
\newtheorem{propn}[thm]{Proposition}

\theoremstyle{definition}
\newtheorem{cond}[thm]{Condition}
\newtheorem{defn}[thm]{Definition}

\newtheorem{assumption}[thm]{Assumption}

\theoremstyle{remark}

\newtheorem{rmk}[thm]{Remark}

%\newcommand{\picturepath}{figures/call}

% equation counter will be reset at the start of each section
\numberwithin{equation}{section}

%opening
\title{Good-deal bounds in a regime-switching diffusion market}

\author{Catherine Donnelly}

\affil{RiskLab, ETH Zurich, Switzerland}

\begin{document}

\maketitle

\begin{abstract}
We consider option pricing in a regime-switching diffusion market.  As the market is incomplete, there is no unique price for a derivative.  We apply the \gd pricing bounds idea to obtain ranges for the price of a derivative.  As an illustration, we calculate the \gd pricing bounds for a European call option and we also examine the stability of these bounds when we change the generator of the Markov chain which drives the regime-switching.  We find that the pricing bounds depend strongly on the choice of the generator.
\end{abstract}

\section{Introduction} \label{intro}
Regime-switching market models are a way of capturing discrete shifts in market behavior.  These shifts could be due to a variety of reasons, such as changes in market regulations, government policies or investor sentiment.  First introduced by \citet{hamilton89.article}, regime-switching models have been shown in various empirical studies to be better at capturing market behavior than their non-regime-switching counterparts (for example, see \citet{angbakaert02.article}, \citet{gray96.article} and \citet{klaassen02.article}).

An example of regime-switching market is one in which there are only two regimes: a bear market regime and a bull market regime.  Suppose the market starts in a bull market regime, in which prices are generally rising.  It stays in this regime for a random length of time before switching to a bear market regime, in which prices are generally falling.  It then stays in the bear market for another random length of time before switching back to the bull market.  This cycle continues ad infinitum.

Due to the regime-switching, the market is incomplete and hence there is no unique risk-neutral martingale measure to use for pricing derivatives.  In fact, there are infinitely many possible risk-neutral martingale measures.  This means that not only is there no unique price for derivatives, but the range of prices which can be obtained from all the possible risk-neutral martingale measures are too wide to be useful in practice.

As prices of derivatives are not unique in incomplete markets, various suggestions have been made either on how to choose a single price or on how to obtain a more restricted, and therefore potentially more useful, range of prices.  We focus in this paper on the latter because it is the market which ultimately decides which risk-neutral martingale measure is used for pricing a derivative and so we should take into account our uncertainty about what the market price will be.  Therefore, we believe it is better to find a range of prices that the market-determined price might reasonably be expected to lie in, rather than determining a single price.

The idea that we build upon is that of the \emph{\gd bound}.  This idea is due to \citet{cochranesaa00.article} and is based on the Sharpe Ratio, which is the excess return on an investment per unit of risk.  Their idea is to bound the Sharpe Ratios of all possible assets in the market and thus exclude Sharpe Ratios which are considered to be too large.  The bound is called a \gd bound.  The method of applying the bound gives a set of risk-neutral martingale measures which can be used to price options.  This results in an \emph{upper and lower \gd bound} on the prices of an option.  The idea was streamlined and extended to models with jumps in \citet{bjorkslinko06.article}, and it is their approach that we follow in this paper.

The uncertainty measured by the size of the \gd bounds reflects the uncertainty within the market model concerning the price of the derivative.  It does not measure uncertainty concerning the choice of the model, by which we mean both the model structure - in this case a regime-switching model - and the model parameters.  Indeed, as we see concretely in a numerical example, changing the model parameters changes a derivative's pricing bounds for a fixed \gd bound.  Thus the model choice is still a very important factor in determining the derivative's \gd pricing bounds.  In summary, the \gd pricing bounds tell us nothing about model uncertainty itself, only about the uncertainty in the choice of a risk-neutral martingale measure within a particular model.

\citet{cochranesaa00.article} outline various ways that the \gd pricing bounds can be used, such as a trader using the bounds as buy and sell points and a bank using them as bid and ask prices for non-traded assets.  The \gd pricing bounds enable us to avoid unreasonable prices and also to examine the price sensitivity to changes in the market price of risk. 

In \citet{bayraktaryoung08.article}, Sharpe Ratios are also used to price options in incomplete markets.  However, the perspective is that of an individual seller of one option, rather than that of the entire market.  The seller of an option decides the option price via his own risk preferences, as expressed by his own chosen Sharpe Ratio.  In other words, the seller of the option chooses the risk-neutral martingale measure under which he prices the option.  It is shown in \citet{bayraktaryoung08.article} that the upper and lower \gd bounds of \citet{cochranesaa00.article} can be obtained; in that case, the seller's chosen risk-neutral martingale measure coincides with the martingale measure which gives the upper \gd bound.  The lower \gd bound is obtained in \citet{bayraktaryoung08.article} by considering the buyer of the option.

A utility-based approach to the \gd bound idea is found in \citet{cerny03.article}, and extended in \citet{kloppelschweizer07.article}.  An alternative approach based on the gain-loss ratio, which is the expectation of an asset's positive excess payoffs divided by the expectation of its negative excess payoffs, is found in \citet{bernardoledoit00.article}.

The aim of this paper is to apply the \gd bound idea to the pricing of derivatives in a regime-switching diffusion market.  The paper is structured as follows.  Section \ref{SECmarketmodel} details the regime-switching market model.  In Section \ref{SECmeasurechange} we identify the set of equivalent martingale measures.  In Section \ref{SECsharperatio} the Sharpe Ratio of an arbitrary asset in the market is defined and we state the extended \HJ bound.  The definitions of the upper and lower \gd bounds on the price of a derivative are in Section \ref{SECproblem}.  The stochastic control approach that we use to find them is detailed in Section \ref{SECstochasticcontrol}.  The minimal martingale measure, which we consider to be a benchmark pricing measure, is given in Section \ref{SECmminmartmeasure}.  A numerical example illustrating the upper and lower \gd bounds of a European call option of various maturities is in Section \ref{SECeurocall}.  We also examine the stability of the \gd bounds when we change the market model's parameters.  Finally, we conclude in Section \ref{SECconclude} with some remarks.

\section{Market model} \label{SECmarketmodel}

We consider a regime-switching diffusion market model in which there is one risk-free asset traded asset and a finite number $\BMdim$ of traded assets.  An example of a risk-free asset is a bank account and typical examples of risky assets are equities, bonds or a pooled fund.  The full mathematical description of the market is given below.

\subsection{Description of the market model}  \label{SUBSECmarketmodel}

We consider a continuous-time financial market model on a complete probability space $\probspace$ where all investment takes place over a finite time horizon $\timeset$, for a fixed $\timehorizon \in (0, \infty)$.  The probability space carries both a Markov chain $\markovchain$ and an $\BMdim$-dimensional standard Brownian motion $\vBM = (\BM_{1}, \ldots, \BM_{\BMdim})^{\top}$, where we use $A^{\top}$ to denote the transpose of $A$.

The information available to the investors in the market at time $t$ is the history of the Markov chain and Brownian motion up to and including time $t$.  Mathematically, this is represented by the filtration
\begin{equation} \label{MKTfiltration}
 \filtration_{t} := \sigma \{ (\markovchain (s), \vBM (s)), s \in [0,t] \} \vee \mathcal{N} ( \Pmeasure ), \quad \forall t \in \timeset, 
\end{equation}
where $\mathcal{N} ( \Pmeasure )$ denotes the collection of all $\Pmeasure$-null events in the probability space $\probspace$.  We assume that $\filtration = \filtration_{\timehorizon}$.

\begin{rmk} \label{RMKindepBMMC}
 As a mathematical consequence of being defined on the same filtered probability space $\filtprobspace$, the Markov chain and the Brownian motion are independent processes.  Relating these processes to economic reality, we might think of the Brownian motion as modeling short-term, micro-economic changes in the market, whereas the Markov chain models long-term macro-economic changes.  With this interpretation, the implicit assumption in our model that these economic changes are independent is a reasonable approximation to reality.  For practical implementation, this means that the number and specification of the market regimes should be chosen to reflect this interpretation.
\end{rmk}

\begin{rmk}
In our model, an investor knows what regime the market is in at each time $t$.  In reality, the market regime is unlikely to be known with certainty, although it could be estimated from market data.  This is an important point to note, since the prices of assets in the market are dependent on the initial market regime.
\end{rmk}

The market is subject to regime-switching, as modelled by the continuous-time Markov chain $\markovchain$ which takes values in a finite state space $\markovspace = \{ 1, \ldots, \markovdim \}$, for some integer $\markovdim \geq 2$.  For example, suppose we wish to model a market in which there are only two regimes: a bull market regime and a bear market regime.  We set $\markovdim=2$ and we might identify $\markovchain (t) = 1$ as corresponding to the market being in the bull market regime at time $t$.  We would then identify $\markovchain (t) = 2$ as corresponding to the market being in the bear market regime at time $t$.

We assume that the Markov chain starts in a fixed state $\markovinitialstate \in \markovspace$, so that $\markovchain(0) = \markovinitialstate$, a.s.  The Markov chain has a generator $\Markovgenerator$, which is a $\markovdim \times \markovdim$ matrix $\Markovgenerator = (\markovgenerator_{ij})_{i,j=1}^{\markovdim}$ with the properties $\markovgenerator_{ij} \geq 0$, for all $i \neq j$ and $\markovgenerator_{ii} = - \sum_{j \neq i} \markovgenerator_{ij}$.  The interpretation of the off-diagonal element $\markovgenerator_{ij}$ of the generator matrix is as the instantaneous rate of transition from state $i$ to state $j$.  To avoid states where there are no transitions into or out of, we assume that $\markovgenerator_{ii} < 0$ for each state $i$. 

Associated with each pair of distinct states $(i,j)$ in the state space of the Markov chain is a point process, or counting process, 
\begin{equation} \label{EQNmarkovpointprocess}
\markovsquareqvprocess_{ij} (t) :=
\sum_{0 < s \leq t} \indicatorfn_{ \{ \markovchain (s_{-}) = i \} } \, \indicatorfn_{ \{ \markovchain (s) = j \} }, \quad \forall t \in \timeset,
\end{equation}
where $\indicatorfn$ denotes the zero-one indicator function.  The process $\markovsquareqvprocess_{ij} (t)$ counts the number of jumps that the Markov chain $\markovchain$ has made from state $i$ to state $j$ up to time $t$.  Define the intensity process
\begin{equation} \label{EQNintensitymarkovmartingale}
 \markovcompensator_{ij} (t) := \markovgenerator_{ij} \, \indicatorfn_{ \{ \markovchain (t_{-}) = i \} }.
\end{equation}
If we compensate $\markovsquareqvprocess_{ij} (t)$ by $\int_{0}^{t} \markovcompensator_{ij} (s) \ud s$, then the resulting process
\begin{equation} \label{EQNcanonicalmarkovmartingale}
 \markovmartingale_{ij} (t) := \markovsquareqvprocess_{ij} (t) - \int_{0}^{t} \markovcompensator_{ij} (s) \ud s
\end{equation}
is a martingale (see \citet[Lemma IV.21.12]{rogerswilliamsii.book}).  We refer to the set of martingales $\{ \markovmartingale_{ij}; i,j \in \markovspace, i \neq j \}$ as \emph{the $\Pmeasure$-martingales of $\markovchain$}.  They are mutually orthogonal, purely discontinuous, square-integrable martingales which are null at the origin.

We consider a financial market that is built upon a finite number $\BMdim$ of traded assets, which we call risky assets, and a risk-free asset.  The risk-free rate of return in the market is denoted by the scalar stochastic process $r$ and the risk-free asset's price process $\bankaccount = \{ \bankaccount (t), t \in \timeset \}$ is governed by
\begin{equation} \label{MKTriskfreeprice}
 \frac{\ud \bankaccount (t)}{\bankaccount (t) } = r (t) \ud t, \quad \forall t \in \timeset, \quad \bankaccount (0) = 1.
\end{equation}
The mean rate of return of the $n$th risky asset is denoted by the scalar stochastic process $\meanret_{n}$ and the volatility process of the $n$th risky asset is denoted by the $\BMdim$-dimensional stochastic process $\vvol_{n} = ( \vol_{n1}, \ldots, \vol_{n \BMdim} )^{\top}$.  The price process $\stockprice_{n} = \{ \stockprice_{n} (t), t \in \timeset \}$ of the $n$th risky asset is then given by 
\begin{equation} \label{MKTriskyprice}
 \frac{\ud \stockprice_{n} (t)}{\stockprice_{n} (t) } = \meanret_{n} (t) \ud t + \vvolT_{n} (t) \ud \vBM (t), \quad \forall t \in \timeset,
\end{equation}
with the initial value $\stockprice_{n} (0)$ being a fixed, strictly positive constant in $\realnumbers$.
\begin{assumption} \label{ASSmktparameter}
The market parameters $r$, $\vmeanret = ( \meanret_{1}, \ldots, \meanret_{\BMdim})^{\top}$ and $\vvol = ( \vvol_{1}^{\top}, \ldots, \vvol_{\BMdim}^{\top})^{\top}$ are sufficiently regular to allow for the existence of a unique strong solution to (\ref{MKTriskfreeprice}) and (\ref{MKTriskyprice}).  Furthermore, the volatility process $\vvol$ is nonsingular.
\end{assumption}

\section{Martingale measures} \label{SECmeasurechange}

\subsection{Equivalent martingale measure}
From the fundamental theorem of arbitrage-free pricing, it is known that existence of an equivalent martingale measure (``EMM'') is equivalent to absence of arbitrage in the market.  Furthermore, the market is complete (in the sense that all claims can be replicated) if and only if the EMM is unique.  In our financial market model, while there is no arbitrage, the market is incomplete.  This means that while EMMs exist, there is no unique one.  This has immediate consequences for the valuation of contingent claims using our model, for example valuing European call options.  We can price a European call option by the usual risk-neutral pricing formula.  However, as there are infinitely many EMMs, we obtain a range of prices rather than a unique price.  The \gd bound approach is a means of narrowing the range of prices, which can be too wide to be useful in practice.  The essential idea is to exclude those EMMs which imply a Sharpe Ratio that is too high.

\subsubsection{The Girsanov kernel process and the Girsanov Theorem}

Suppose we are given a probability measure $\Qmeasure$ on $(\measurablespace)$ which is equivalent to the (real-world) probability measure $\Pmeasure$.  We define the likelihood process corresponding to the measure $\Qmeasure$ in the usual way as
\begin{displaymath}
\RNderivative (t) := \expectation \left( \frac{\ud \Qmeasure}{\ud \Pmeasure} \, \bigg\vert \, \filtration_{t} \right), \quad \forall t \in \timeset.
\end{displaymath}
We can assume that $\RNderivative$ is a positive $\{ \filtration_{t} \}$-martingale under the measure $\Pmeasure$ (see \citet[Theorem IV.17.1]{rogerswilliamsii.book}) with $\RNderivative (0) = 1$, $\Pmeasure$-a.s.  Recalling that the filtration is generated by both the Brownian motion $\vBM$ and the Markov chain $\markovchain$, we can apply an appropriate martingale representation theorem (for example, see \citet[Theorem 5.1]{elliott76.article}) to obtain predictable and suitably integrable stochastic processes $(\vRNforBM,\vRNforMC)$, for $\vRNforBM = (\RNforBM_{1}, \ldots, \RNforBM_{\BMdim})^{\top}$ and $\vRNforMC := \{ \RNforMC_{ij}; \, i,j=1,\ldots,\markovdim, j \neq i \}$, satisfying
\begin{equation} \label{EQNRNderdefn}
 \frac{\ud \RNderivative (t)}{\RNderivative(t_{-})} = \vRNforBMT (t) \ud \vBM (t) + \sum_{i=1}^{\markovdim} \sum_{\substack{j=1, \\ j \neq i}}^{\markovdim} \RNforMC_{ij} (t) \ud \markovmartingale_{ij} (t), \quad \forall t \in \timeset.
\end{equation}
In order that the measure $\Qmeasure$ is non-negative, the process $\vRNforMC$ must satisfy
\begin{displaymath}
  \RNforMC_{ij} (t) \geq -1, \qquad \forall j \neq i, \quad \forall t \in \timeset. 
\end{displaymath}
We call $(\vRNforBM,\vRNforMC)$ a \emph{Girsanov kernel process}.  As a consequence of the Girsanov theorem (for example, see \citet[Theorem 40, page 135]{protter.book}),
\begin{itemize}
 	\item we have
		\begin{equation} \label{EQNmeasurechangeBM}
			\ud \vPBM (t) =  \vRNforBMT (t) \ud t + \ud \vQBM (t),
		\end{equation}
		where, by L{\'e}vy's Theorem, $\vQBM$ is a $\Qmeasure$-Brownian motion; and
	\item the process 
		\begin{equation} \label{EQNmeasurechangemart}
			\Qmarkovmartingale_{ij} (t) := \markovsquareqvprocess_{ij} (t) - \int_{0}^{t} \left( 1 + \RNforMC_{ij} (s) \right) \markovcompensator_{ij} (s) \ud s, \quad \forall t \in \timeset,
		\end{equation}
		is a $\Qmeasure$-martingale, for each $j \neq i$.  We can interpret $\left( 1 + \RNforMC_{ij} (t) \right) \markovcompensator_{ij} (t)$ as the intensity of the point process $\markovsquareqvprocess_{ij} (t)$ under the measure $\Qmeasure$.
\end{itemize}
The set of martingales $\{ \Qmarkovmartingale_{ij}; i,j \in \markovspace, j \neq i \}$ are the $\Qmeasure$-martingales of $\markovchain$.  They are mutually orthogonal, purely discontinuous martingales which are null at the origin.  Their integrability depends on the integrability of $\RNforMC_{ij} (t)$.  Furthermore, substituting from (\ref{EQNcanonicalmarkovmartingale}) into (\ref{EQNmeasurechangemart}), we find
\begin{equation} \label{EQNmeasurechangeMC}
\ud \markovmartingale_{ij} (t) = \RNforMC_{ij} (t)\markovcompensator_{ij} (t) \ud t + \ud \Qmarkovmartingale_{ij} (t),
\end{equation}
which is analogous to (\ref{EQNmeasurechangeBM}).
\begin{rmk}
 While $\markovchain$ retains the Markov property under the measure $\Qmeasure$ (this can be shown using martingale problems, for example see \citet[Theorem 4.4.1]{ethierkurtz.book}), it is not generally a Markov chain.  This is because the intensity $\left( 1 + \RNforMC_{ij} (t) \right) \markovcompensator_{ij} (t)$ of the point process under the measure $\Qmeasure$ is not generally deterministic.
\end{rmk}

\begin{rmk} \label{RMKopenintervalgd}
The condition that $\RNforMC_{ij} (t) \geq -1$ is to ensure that the measure $\Qmeasure$ is non-negative.  However, if $\RNforMC_{ij} (t) = -1$ then $\Pmeasure$ and $\Qmeasure$ are not necessarily equivalent which means that we can have arbitrage.  As discussed in \citet[Remark 3.3]{bjorkslinko06.article}, to avoid any arbitrage possibility we can replace the inequality $\RNforMC_{ij} (t) \geq -1$ by $\RNforMC_{ij} (t) \geq -1 + \epsilon$, for some fixed $0 < \epsilon \ll 1$ or we can regard any good-deal bounds derived with the constraint $\RNforMC_{ij} (t) \geq -1$ as open intervals of good-deal bounds.  We choose the latter alternative since it is mathematically more convenient.
\end{rmk}

\subsubsection{The martingale condition and admissible Girsanov kernel processes}
Given a suitable process $(\vRNforBM,\vRNforMC)$, we can generate a corresponding measure $\Qmeasure$ by using (\ref{EQNRNderdefn}) to define the likelihood process $\RNderivative$ and then constructing the measure $\Qmeasure$ by
\begin{equation} \label{EQNQmeasuredefn}
 \frac{\ud \Qmeasure}{\ud \Pmeasure} = \RNderivative (t), \quad \textrm{on $\filtration_{t}$}.
\end{equation}
Let $\Qmeasure$ be the measure generated by the Girsanov kernel process $(\vRNforBM,\vRNforMC)$.  Consider an arbitrary asset in the market, with price process $\geninstrumentprice = \{ \geninstrumentprice (t); t \in \timeset \}$.  Note that this asset is not restricted to the traded risky assets or risk-free asset, but it could be any derivative or self-financing strategy based on them and the Markov chain $\markovchain$.  If we price this asset using a risk-neutral measure $\Qmeasure$, then the discounted price process is an $\{ \filtration_{t} \}$-martingale under the measure $\Qmeasure$.  As the filtration $\{ \filtration_{t} \}$ is generated by both the Brownian motion and the Markov chain (recall (\ref{MKTfiltration})), then using a suitable martingale representation theorem (such as \citet[Theorem 5.1]{elliott76.article}) this $\{ \filtration_{t} \}$-martingale can be expressed as the sum of a stochastic integral with respect to the Brownian motion and a stochastic integral with respect to the $\Qmeasure$-martingales of the Markov chain.  If we use the Girsanov theorem to obtain the $\Pmeasure$-dynamics of the price process, we still have a term involving the martingales of the Markov chain.  This is the reason why the $\Pmeasure$-dynamics of the price process $\geninstrumentprice$ are of the form
\begin{equation} \label{EQNgeninstrumentPdynamics}
  \frac{\ud \geninstrumentprice (t)}{\geninstrumentprice (t_{-})} = \geninstrumentmeanret (t) \ud t + \left( \vgeninstrumentvolBM (t) \right)^{\top} \ud \vPBM (t) + \sum_{i=1}^{\markovdim} \sum_{\substack{j=1, \\ j \neq i}}^{\markovdim} \geninstrumentvolMC_{ij} (t) \ud \Pmarkovmartingaleij (t).
\end{equation}
The processes $\geninstrumentmeanret$, $\vgeninstrumentvolBM = (\geninstrumentvolBM_{1}, \ldots, \geninstrumentvolBM_{\BMdim})^{\top}$ and $(\geninstrumentvolMC_{ij})_{j \neq i}$ are suitably integrable and measurable with the condition, in order to avoid negative asset prices, that $\geninstrumentvolMC_{ij} (t) \geq -1$ for each $j \neq i$.  Note that if the asset is not the traded asset then the processes $\geninstrumentmeanret$,  $\vgeninstrumentvolBM$ and $(\geninstrumentvolMC_{ij})_{j \neq i}$ depend on the choice of the risk-neutral measure through the corresponding Girsanov kernel process.

Apply (\ref{EQNmeasurechangeBM}) and (\ref{EQNmeasurechangeMC}) to (\ref{EQNgeninstrumentPdynamics}) to obtain the price dynamics $\geninstrumentprice$ of the arbitrarily chosen asset under the measure $\Qmeasure$:
\begin{equation} \label{EQNgeninstrumentQdynamics}
\begin{split}
 \frac{\ud \geninstrumentprice (t)}{\geninstrumentprice (t_{-})} & = \left( \geninstrumentmeanret (t) + \vRNforBMT (t) \vgeninstrumentvolBM (t) + \sum_{i=1}^{\markovdim} \sum_{\substack{j=1, \\ j \neq i}}^{\markovdim} \geninstrumentvolMC_{ij} (t) \RNforMC_{ij} (t) \, \markovcompensator_{ij} (t) \right) \ud t \\
& \qquad + \left( \vgeninstrumentvolBM (t) \right)^{\top} \ud \vQBM (t) + \sum_{i=1}^{\markovdim} \sum_{\substack{j=1, \\ j \neq i}}^{\markovdim} \geninstrumentvolMC_{ij} (t) \ud \Qmarkovmartingale_{ij} (t).
\end{split}
\end{equation}
The measure $\Qmeasure$ is a martingale measure if and only if the local rate of return of the asset under the measure $\Qmeasure$ equals the risk-free rate of return $r$.  Thus we obtain the following martingale condition.
\begin{propn}\emph{Martingale condition} \label{PROPNmartcondition}
The measure $\Qmeasure$ generated by the Girsanov kernel process $(\vRNforBM,\vRNforMC)$ is a martingale measure if and only if
\begin{equation}
\RNforMC_{ij} (t) \geq -1, \quad \forall j \neq i,
\end{equation}
and for any asset in the market whose price process $\geninstrumentprice$ has $\Pmeasure$-dynamics given by (\ref{EQNgeninstrumentPdynamics}), we have
\begin{equation} \label{EQNriskfreeequality}
r(t) = \geninstrumentmeanret (t) + \vRNforBMT (t) \vgeninstrumentvolBM (t)  + \sum_{i=1}^{\markovdim} \sum_{\substack{j=1, \\ j \neq i}}^{\markovdim} \geninstrumentvolMC_{ij} (t) \RNforMC_{ij} (t) \, \markovcompensator_{ij} (t), \quad \forall t \in \timeset.
\end{equation}
\end{propn}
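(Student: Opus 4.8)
The plan is to reduce the assertion ``$\Qmeasure$ is a martingale measure'' to the statement that, for every asset, the discounted price process is a $\Qmeasure$-local martingale, and then to read the required drift condition directly off the $\Qmeasure$-dynamics already computed in (\ref{EQNgeninstrumentQdynamics}). The non-negativity constraint $\RNforMC_{ij}(t) \geq -1$ is logically separate: it guarantees that the candidate $\Qmeasure$ constructed via (\ref{EQNQmeasuredefn}) is a genuine measure equivalent to $\Pmeasure$, and I would dispatch it first.

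For that constraint, I would examine the jumps of the likelihood process $\RNderivative$ solving (\ref{EQNRNderdefn}). Since the $\Pmeasure$-martingales $\markovmartingale_{ij}$ jump by exactly $1$ when $\markovchain$ moves from $i$ to $j$ (the compensator in (\ref{EQNcanonicalmarkovmartingale}) being continuous), such a transition produces $\RNderivative(t) = \RNderivative(t_{-})\,(1 + \RNforMC_{ij}(t))$. Hence $\RNderivative$ stays non-negative precisely when $\RNforMC_{ij}(t) \geq -1$ for every $j \neq i$, and strictly positive---so that $\Qmeasure \sim \Pmeasure$---when the inequality is strict (cf.\ Remark \ref{RMKopenintervalgd}). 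This identifies the first condition with ``$\Qmeasure$ is a non-negative measure''.

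For the drift condition I would discount. Writing $\tilde{\geninstrumentprice} := \geninstrumentprice / \bankaccount$ and recalling from (\ref{MKTriskfreeprice}) that $\bankaccount$ is continuous and of finite variation with $\ud \bankaccount / \bankaccount = r \ud t$, integration by parts introduces no bracket terms and gives $\ud \tilde{\geninstrumentprice}(t) / \tilde{\geninstrumentprice}(t_{-}) = \ud \geninstrumentprice(t)/\geninstrumentprice(t_{-}) - r(t)\ud t$. Substituting (\ref{EQNgeninstrumentQdynamics}) then expresses $\tilde{\geninstrumentprice}$ as a predictable finite-variation part with density $\geninstrumentmeanret + \vRNforBMT \vgeninstrumentvolBM + \sum_{i}\sum_{j \neq i} \geninstrumentvolMC_{ij}\RNforMC_{ij}\markovcompensator_{ij} - r$ plus the stochastic integrals against $\vQBM$ and the $\Qmarkovmartingale_{ij}$. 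The latter are $\Qmeasure$-local martingales, because $\vQBM$ is a $\Qmeasure$-Brownian motion by (\ref{EQNmeasurechangeBM}) and L{\'e}vy's theorem and each $\Qmarkovmartingale_{ij}$ is a $\Qmeasure$-martingale by (\ref{EQNmeasurechangemart}). By uniqueness of the canonical decomposition of a special semimartingale, $\tilde{\geninstrumentprice}$ is a $\Qmeasure$-local martingale if and only if its finite-variation part vanishes, that is, its drift density is zero $\ud t \otimes \ud \Qmeasure$-a.e.; rearranging this identity is exactly (\ref{EQNriskfreeequality}). Quantifying over the arbitrary asset yields the stated equivalence.

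The main obstacle I anticipate is the gap between local and true martingale together with the ``for any asset'' quantifier: to call $\Qmeasure$ a martingale measure one wants the discounted prices to be true $\Qmeasure$-martingales for the whole class of processes of the form (\ref{EQNgeninstrumentPdynamics}), which relies on the integrability (admissibility) hypotheses folded into the phrase ``suitably integrable'', and on the fact---supplied by the martingale representation theorem of \citet[Theorem 5.1]{elliott76.article}---that (\ref{EQNgeninstrumentPdynamics}) really is the generic form of a price process on this filtration. Granting these, the equivalence follows, with the drift identity holding uniformly in the asset.
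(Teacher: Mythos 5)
Your proposal is correct and follows essentially the same route as the paper: the paper derives the $\Qmeasure$-dynamics (\ref{EQNgeninstrumentQdynamics}) by substituting the Girsanov relations (\ref{EQNmeasurechangeBM}) and (\ref{EQNmeasurechangeMC}) into (\ref{EQNgeninstrumentPdynamics}), and then takes the proposition to follow from requiring the local rate of return under $\Qmeasure$ to equal $r$, which is exactly (\ref{EQNriskfreeequality}). Your discounting argument with the canonical decomposition of special semimartingales, and the jump analysis of $\RNderivative$ for the constraint $\RNforMC_{ij}(t) \geq -1$, simply make rigorous the steps the paper treats as immediate.
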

We refer to a Girsanov kernel process $(\vRNforBM,\vRNforMC)$ for which the generated measure $\Qmeasure$ is a martingale measure as an \emph{admissible} Girsanov kernel process.

\begin{rmk}
From (\ref{EQNriskfreeequality}) we have the following economic interpretation of an admissible Girsanov kernel process $(\vRNforBM, \vRNforMC)$: the process $-\vRNforBM$ is the market price of diffusion risk and $-\RNforMC_{ij}$ is the market price of jump risk, for a jump in the Markov chain from state $i$ to state $j$.
\end{rmk}

Suppose we are given a Girsanov kernel process $(\vRNforBM,\vRNforMC)$ for which the generated measure $\Qmeasure$ is a martingale measure.  The price dynamics under $\Pmeasure$ of the $n$th underlying risky stock are as in (\ref{MKTriskyprice}), that is
\begin{displaymath}
 \frac{\ud \stockprice_{n} (t)}{\stockprice_{n} (t) } = \meanret_{n} (t) \ud t + \vvolT_{n} (t) \ud \vBM (t), \quad \forall t \in \timeset.
\end{displaymath}
By Proposition \ref{PROPNmartcondition}, we must have that
\begin{displaymath}
 r(t) = \meanret_{n} (t) + \vRNforBM^{\top} (t) \vvol_{n} (t), \quad \forall t \in \timeset, \quad \textrm{for $n=1,\ldots,\BMdim$}.
\end{displaymath}
This means that the market price of diffusion risk $-\vRNforBM$ is determined by the price dynamics of the underlying risky assets, with the solution given by
\begin{displaymath}
 \vRNforBM (t) = - \left( \vvolT (t) \right)^{-1} \left( \vmeanret (t) - r (t) \mathbf{1} \right),
\end{displaymath}
where $\mathbf{1} \in \realnumbersN$ has all entries equal to one.  However, as there is no traded asset in the market which is based on the Markov chain, we cannot say anything about the market price of jump risk $-\RNforMC_{ij}$.

\section{The Sharpe Ratio and a \HJ Bound} \label{SECsharperatio}

\subsection{The Sharpe Ratio of an arbitrary asset}
We define a Sharpe Ratio process for an arbitrarily chosen asset, with $\Pmeasure$-dynamics as in (\ref{EQNgeninstrumentPdynamics}).  Broadly, the Sharpe Ratio is the excess return above the risk-free rate of the asset per unit of risk.  We make this definition precise in our model.   Define a \emph{volatility process} $\geninstrumentvolatility$ for the asset by
\begin{equation} \label{EQNgeninstrumentvolatilityeqn}
 \ud \langle \geninstrumentprice, \geninstrumentprice \rangle (t) = \geninstrumentprice^{2} (t_{-}) \geninstrumentvolatility^{2} (t) \ud t,
\end{equation}
where $\langle \cdot , \cdot \rangle$ is the angle-bracket process.  Substituting for $\geninstrumentprice$ from (\ref{EQNgeninstrumentPdynamics}) and using $\lVert \cdot \rVert$ to denote the usual Euclidean norm, we obtain
\begin{equation} \label{EQNgeninstrumentvolatilitytwin}
\ud \langle \geninstrumentprice, \geninstrumentprice \rangle (t) = \geninstrumentprice^{2} (t_{-}) \left( \lVert \vgeninstrumentvolBM (t) \rVert^{2} + \sum_{i=1}^{\markovdim} \sum_{\substack{j=1, \\ j \neq i}}^{\markovdim} \big\lvert \geninstrumentvolMC_{ij} (t) \big\rvert^{2} \, \markovcompensator_{ij} (t) \right) \ud t.
\end{equation}
Comparing (\ref{EQNgeninstrumentvolatilityeqn}) and (\ref{EQNgeninstrumentvolatilitytwin}), we see that the squared volatility process satisfies 
\begin{displaymath}
 \geninstrumentvolatility^{2} (t) = \lVert \vgeninstrumentvolBM (t) \rVert^{2} +  \sum_{i=1}^{\markovdim} \sum_{\substack{j=1, \\ j \neq i}}^{\markovdim} \lvert \geninstrumentvolMC_{ij} (t) \rvert^{2} \, \markovcompensator_{ij} (t).
\end{displaymath}
Recalling that the state space of the Markov chain $\markovchain$ is denoted by $\markovspace = \{1, \ldots, \markovdim \}$ and the intensity process $\markovcompensator_{ij} (t)$ is given by (\ref{EQNintensitymarkovmartingale}), define the norm $\lVert \cdot \rVert_{\markovcompensator (t)}$ in the Hilbert space $L^{2} (\markovspace \times \markovspace, \markovcompensator (t) )$ by
\begin{displaymath}
 \lVert \pmb{\gamma} (t) \rVert^{2}_{\markovcompensator (t)} := \sum_{i=1}^{\markovdim} \sum_{\substack{j=1, \\ j \neq i}}^{\markovdim} \lvert \gamma_{ij} (t) \rvert^{2} \, \markovcompensator_{ij} (t).
\end{displaymath}
Then we can write
\begin{displaymath}
 \geninstrumentvolatility^{2} (t) = \lVert \vgeninstrumentvolBM (t) \rVert^{2} + \lVert \vgeninstrumentvolMC (t) \rVert^{2}_{\markovcompensator (t)}.
\end{displaymath}
Defining the Hilbert space
\begin{equation} \label{EQNhilbertspace}
\hilbertspace := \realnumbersN \times L^{2} (\markovspace \times \markovspace, \markovcompensator (t) ),
\end{equation}
and denoting by $\lVert \cdot \rVert_{\hilbertspace}$ the norm in the Hilbert space $\hilbertspace$, we can also express the volatility process as
\begin{equation} \label{EQNvolatilityprocess}
 \geninstrumentvolatility (t) = \lVert \left( \vgeninstrumentvolBM (t), \vgeninstrumentvolMC (t) \right) \rVert_{\hilbertspace}.
\end{equation}

Finally, we are in a position to define the \emph{Sharpe Ratio process} $\sharperatio$ for the arbitrarily-chosen asset.  As $\geninstrumentmeanret$ is the local mean rate of return of the asset under the measure $\Pmeasure$,
\begin{equation} \label{EQNsharperatio}
 \sharperatio (t) := \frac{\geninstrumentmeanret (t) - r (t)}{\geninstrumentvolatility (t)}.
\end{equation}
The Sharpe Ratio process depends on the chosen asset's price process.  However, we seek a bound that applies to all assets' Sharpe Ratio processes.  To do this, we use the extended \HJ inequality, which is derived in \citet{bjorkslinko06.article} and is an extended version of the inequality introduced by \citet{hansenjagannathan91.article}.

\subsection{An extended \HJ Bound}

\begin{lma}[An extended \HJ Bound] \label{LMAHJbound}
Recall the Hilbert space $\hilbertspace$ in (\ref{EQNhilbertspace}).  For every admissible Girsanov kernel process $(\vRNforBM, \vRNforMC)$ and for any asset in the market whose price process $\geninstrumentprice$ has $\Pmeasure$-dynamics given by (\ref{EQNgeninstrumentPdynamics}) and, consequently, whose Sharpe Ratio process $\sharperatio$ is given by (\ref{EQNsharperatio}), the following inequality holds.
\begin{displaymath}
 \lvert \sharperatio (t) \rvert \leq \lVert \left( \vRNforBM (t), \vRNforMC (t) \right) \rVert_{\hilbertspace},
\end{displaymath}
that is
\begin{equation} \label{EQNSRextHKbnd}
 \lvert \sharperatio (t) \rvert^{2} \leq \lVert \vRNforBM (t) \rVert^{2} + \sum_{i=1}^{\markovdim} \sum_{\substack{j=1, \\ j \neq i}}^{\markovdim} \lvert \RNforMC_{ij} (t) \rvert^{2} \, \markovcompensator_{ij} (t).
\end{equation}
\end{lma}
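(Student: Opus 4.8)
The plan is to obtain the bound as a direct application of the Cauchy--Schwarz inequality in the Hilbert space $\hilbertspace$ defined in (\ref{EQNhilbertspace}), with the martingale condition supplying the crucial rewriting of the excess return. The key observation is that the norm $\lVert \cdot \rVert_{\hilbertspace}$ is induced by the inner product
\begin{displaymath}
\left\langle \left( \vgeninstrumentvolBM (t), \vgeninstrumentvolMC (t) \right), \left( \vRNforBM (t), \vRNforMC (t) \right) \right\rangle_{\hilbertspace} = \vRNforBMT (t) \vgeninstrumentvolBM (t) + \sum_{i=1}^{\markovdim} \sum_{\substack{j=1, \\ j \neq i}}^{\markovdim} \geninstrumentvolMC_{ij} (t) \, \RNforMC_{ij} (t) \, \markovcompensator_{ij} (t),
\end{displaymath}
so that the two cross terms appearing in the martingale condition are precisely the $\hilbertspace$-inner product of the asset's volatility pair $(\vgeninstrumentvolBM, \vgeninstrumentvolMC)$ with the Girsanov kernel pair $(\vRNforBM, \vRNforMC)$.

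First I would invoke Proposition \ref{PROPNmartcondition}. Since $(\vRNforBM, \vRNforMC)$ is admissible, equation (\ref{EQNriskfreeequality}) holds, and rearranging it gives
\begin{displaymath}
\geninstrumentmeanret (t) - r (t) = - \vRNforBMT (t) \vgeninstrumentvolBM (t) - \sum_{i=1}^{\markovdim} \sum_{\substack{j=1, \\ j \neq i}}^{\markovdim} \geninstrumentvolMC_{ij} (t) \, \RNforMC_{ij} (t) \, \markovcompensator_{ij} (t) = - \left\langle \left( \vgeninstrumentvolBM (t), \vgeninstrumentvolMC (t) \right), \left( \vRNforBM (t), \vRNforMC (t) \right) \right\rangle_{\hilbertspace}.
\end{displaymath}
Taking absolute values and applying the Cauchy--Schwarz inequality in $\hilbertspace$ then yields
\begin{displaymath}
\lvert \geninstrumentmeanret (t) - r (t) \rvert \leq \left\lVert \left( \vgeninstrumentvolBM (t), \vgeninstrumentvolMC (t) \right) \right\rVert_{\hilbertspace} \left\lVert \left( \vRNforBM (t), \vRNforMC (t) \right) \right\rVert_{\hilbertspace}.
\end{displaymath}

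To finish, I would identify the first factor on the right with the asset's volatility process: by (\ref{EQNvolatilityprocess}) we have $\geninstrumentvolatility (t) = \lVert ( \vgeninstrumentvolBM (t), \vgeninstrumentvolMC (t) ) \rVert_{\hilbertspace}$. Dividing through by $\geninstrumentvolatility (t)$ and recalling the definition (\ref{EQNsharperatio}) of the Sharpe Ratio process gives $\lvert \sharperatio (t) \rvert \leq \lVert ( \vRNforBM (t), \vRNforMC (t) ) \rVert_{\hilbertspace}$, which is exactly the claimed inequality (\ref{EQNSRextHKbnd}) once the norm is written out. I do not expect any serious obstacle here: the only point requiring a word of care is the degenerate case $\geninstrumentvolatility (t) = 0$, where the Sharpe Ratio is undefined; in that case the Cauchy--Schwarz step already forces $\geninstrumentmeanret (t) - r (t) = 0$, so the bound holds vacuously or by convention. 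The substance of the argument is simply recognising the inner-product structure hidden in the martingale condition.
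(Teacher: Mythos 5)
Your proof is correct and is precisely the argument the paper omits: the paper's ``proof'' simply cites \citet[Theorem A.1]{bjorkslinko06.article}, and that proof is exactly your computation --- use the martingale condition (\ref{EQNriskfreeequality}) to express the excess return $\geninstrumentmeanret(t) - r(t)$ as minus the $\hilbertspace$-inner product of $(\vgeninstrumentvolBM, \vgeninstrumentvolMC)$ with $(\vRNforBM, \vRNforMC)$, apply Cauchy--Schwarz, and divide by $\geninstrumentvolatility(t) = \lVert (\vgeninstrumentvolBM(t), \vgeninstrumentvolMC(t)) \rVert_{\hilbertspace}$. Your handling of the degenerate case $\geninstrumentvolatility(t) = 0$ is a sensible extra precaution; no gaps.
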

\begin{proof}
The proof follows that of \citet[Theorem A.1]{bjorkslinko06.article} and is therefore omitted.
\end{proof}
From Lemma \ref{LMAHJbound}, we see that we can bound the Sharpe Ratios of all assets in the market by bounding the right-hand side of (\ref{EQNSRextHKbnd}) by a constant.

\section{The general problem} \label{SECproblem}
On the market detailed in Subsection \ref{SUBSECmarketmodel}, we consider the valuation of a general contingent claim.  To apply the \gd bound idea, suppose we are given a contingent claim $\contingentclaim$ of the form
\begin{equation}
 \contingentclaim := \contingentfn (\vstockprice (\timehorizon), \markovchain (\timehorizon)),
\end{equation}
for a deterministic, measurable function $\Phi$, where $\vstockprice = (\stockprice_{1}, \ldots, \stockprice_{\BMdim})^{\top}$ is the vector of the risky assets' price processes.  As there is no unique martingale measure in the market, there is no unique price for the contingent claim.  Rather than choosing one particular martingale measure to price the contingent claim, we seek instead to find a reasonable range of prices by excluding those martingale measures which imply Sharpe Ratios which are too high.

\subsection{The \gd bound}
The key idea is that to restrict the set of martingale measures by way of the Sharpe Ratio, we use the \HJ bound.  Rather than bounding the Sharpe Ratios directly, we bound the right-hand side of (\ref{EQNSRextHKbnd}) by a constant.  We call the constant a \emph{\gd bound}.

\begin{cond} \label{CONDmingdb}
There exists $\gooddealbnd_{0} \in \realnumbers$ such that
\begin{displaymath}
 \gooddealbnd_{0} = \sup_{t \in \timeset} \lVert \vRNforBM (t) \rVert^{2}, \qquad \textrm{a.s.}
\end{displaymath}
\end{cond}
\begin{defn} \label{DEFNgdb}
 A \gd bound is a constant $\gooddealbnd \geq \gooddealbnd_{0}$.
\end{defn}
\begin{rmk} \label{RMKgdinterpret}
 A chosen \gd bound $\gooddealbnd$ bounds the Sharpe Ratio process $\sharperatio$ of any asset in the market as follows:
\begin{equation} \label{EQNgdbinequality}
 \lvert \sharperatio (t) \rvert^{2} \leq \lVert \vRNforBM (t) \rVert^{2} + \sum_{i=1}^{\markovdim} \sum_{\substack{j=1, \\ j \neq i}}^{\markovdim} \lvert \RNforMC_{ij} (t) \rvert^{2} \, \markovcompensator_{ij} (t) \leq \gooddealbnd.
\end{equation}
In other words, $\lvert \sharperatio (t) \rvert \leq \sqrt{\gooddealbnd}$.  The economic interpretation is that, under the \gd bound approach, $\sqrt{\gooddealbnd}$ and $- \sqrt{\gooddealbnd}$ are the highest and lowest achievable instantaneous Sharpe Ratio in the market, respectively.  However, in the regime-switching diffusion market, we see from (\ref{EQNgdbinequality}) that the \gd bound $\gooddealbnd$ is really a bound on the price $-\RNforMC_{ij}$ of regime change risk, since the price $-\vRNforBM$ of diffusion risk is determined by the traded assets.
\end{rmk}

\subsection{The \gd bound price processes}
We consider the problem of finding the upper and lower \gd bounds on the range of possible prices of the contingent claim $\contingentclaim$.
\begin{defn} \label{DEFNuppergooddeal}
Suppose we are given a \gd bound $\gooddealbnd$.  The \emph{upper \gd price process} $\valuefnupper$ for the bound $\gooddealbnd$ is the optimal value process for the control problem
\begin{equation} \label{EQNclaimprice}
 \sup_{(\vRNforBM, \vRNforMC)} \expectationQ \left( e^{ - \int_{t}^{\timehorizon} r (\tau) \ud \tau } \contingentfn (\vstockprice (\timehorizon), \markovchain (\timehorizon)) \, \bigg\vert \, \filtration_{t} \right),
\end{equation}
where the predictable processes $(\vRNforBM, \vRNforMC)$ are subject to the constraints
\begin{equation} \label{EQNconstraintBM}
 \vRNforBM (t) = - \left( \vvolT (t) \right)^{-1} \left( \vmeanret (t) - r (t) \mathbf{1} \right),
\end{equation}
\begin{equation} \label{EQNconstraintMC}
\RNforMC_{ij} (t) \geq -1, \qquad \textrm{for $i,j=1, \ldots, \markovdim$, $j \neq i$},
\end{equation}
and
\begin{equation} \label{EQNconstraintGDB}
 \lVert \vRNforBM (t) \rVert^{2} + \sum_{i=1}^{\markovdim} \sum_{\substack{j=1, \\ j \neq i}}^{\markovdim} \lvert \RNforMC_{ij} (t) \rvert^{2} \, \markovcompensator_{ij} (t) \leq \gooddealbnd,
\end{equation}
for all $t \in \timeset$.
\end{defn}
\begin{defn} \label{DEFNlowergooddeal}
 The \emph{lower \gd price process} $\valuefnlower$ is defined as in Definition \ref{DEFNuppergooddeal} except that ``sup'' in (\ref{EQNclaimprice}) is replaced by ``inf''.
\end{defn}
\begin{rmk}
The risk-neutral valuation formula in (\ref{EQNclaimprice}) implies that the local rate of return of the price process corresponding to the contingent claim $\contingentclaim = \contingentfn (\vstockprice (\timehorizon), \markovchain (\timehorizon))$ equals the risk-free rate $r$ under the measure $\Qmeasure$.  The equality constraint (\ref{EQNconstraintBM}) ensures that $\vRNforBM$ is consistent with the market price of jump risk.  Together with the constraint (\ref{EQNconstraintMC}), these ensure that the measure $\Qmeasure$ generated by $(\vRNforBM, \vRNforMC)$ is a martingale measure, as in Proposition \ref{PROPNmartcondition}.  Note that, due to the constant bound on $(\vRNforBM, \vRNforMC)$ in the constraint (\ref{EQNconstraintGDB}), the measure $\Qmeasure$ generated by $(\vRNforBM, \vRNforMC)$ is a martingale measure, and not just a local martingale measure.
\end{rmk}
\begin{rmk}
The constraint (\ref{EQNconstraintGDB}) arises from the \gd bound.  It is obtained by bounding the right-hand side of (\ref{EQNSRextHKbnd}).
\end{rmk}
\begin{rmk}
 The only unknown in the constraints (\ref{EQNconstraintBM})-(\ref{EQNconstraintGDB}) is the market price of jump risk $\RNforMC_{ij} (t)$.  If we obtain wide \gd pricing bounds for a derivative then this tells us that the choice of the market price of jump risk $\RNforMC_{ij} (t)$ has a large impact on the derivative's price.  Thus wide pricing bounds are a signal that we should explore additional ways of further restricting the possible values of the market price of jump risk $\RNforMC_{ij} (t)$.  This point is also made in \citet{cochranesaa00.article}.
\end{rmk}

The goal is to calculate the upper and lower \gd bound price processes, which are what we consider to be reasonable bounds on the possible prices of the contingent claim $\contingentclaim$.  To calculate them, we use a stochastic control approach.

\section{Stochastic control approach} \label{SECstochasticcontrol}
To ensure that the Markovian structure is preserved under the martingale measure $\Qmeasure$, we need the following condition.
\begin{cond} \label{CONDformofkernel}
The maximum in (\ref{EQNclaimprice}) is taken over Girsanov kernel processes $(\vRNforBM,\vRNforMC)$ of the form
\begin{displaymath}
 \vRNforBM (t) = \vRNforBM (t, \vstockprice (t), \markovchain (t_{-})) \quad \textrm{and} \quad \RNforMC_{ij} (t) = \RNforMC_{ij} (t, \vstockprice (t), \markovchain (t_{-})), \quad \forall j \neq i,
\end{displaymath}
and $\RNforMC_{ii} (t) = 0$, for all $t \in \timeset$.
\end{cond}
\begin{rmk}
We note from the constraint (\ref{EQNconstraintBM}) that the process $\vRNforBM$ is completely determined by the market parameters $r(t)$, $\vmeanret(t)$ and $\vvol (t)$.  This means that the requirement $\vRNforBM (t) = \vRNforBM (t, \vstockprice (t), \markovchain (t_{-}))$ is really a requirement that the market parameters are of the form
\begin{displaymath}
 r (t) = r (t, \vstockprice (t), \markovchain (t_{-})), \quad \vmeanret (t) = \vmeanret (t, \vstockprice (t), \markovchain (t_{-})) \quad \textrm{and} \quad \vvol (t) = \vvol (t, \vstockprice (t), \markovchain (t_{-})).
\end{displaymath}
\end{rmk}

\subsection{The \gd functions}

Under Condition \ref{CONDformofkernel}, the optimal expected value in (\ref{EQNclaimprice}) can be written as $\valuefnupper (t,\vstockprice (t), \markovchain (t_{-}))$ where the deterministic mapping $\valuefnupper : \timeset \times \realnumbersN_{+} \times \markovspace \rightarrow \realnumbers_{+}$ is known as the \emph{optimal value function}.  From general dynamic programming theory (for example, see \citet[Chapter 19]{bjork.book}), the optimal value function satisfies the following Hamilton-Jacobi-Bellman equation
\begin{align} \label{EQNHJB}
\frac{\partial \valuefn}{\partial t} + \sup_{(\vRNforBM, \vRNforMC)} \left\{ \infinitoperatorBMMC \valuefn \right\} - r \valuefn & = 0 \\
\valuefn (\timehorizon, \mathbf{x}, i) & = \contingentfn (\mathbf{x}, i), \notag
\end{align}
where the supremum in (\ref{EQNHJB}) is subject to the constraints (\ref{EQNconstraintBM}) - (\ref{EQNconstraintGDB}).  An application of It{\^o}'s formula (for example, see \citet[Theorem V.18, page 278]{protter.book}) shows that the infinitesimal operator $\infinitoperatorBMMC$ is given by
\begin{equation} \label{EQNinfinitoperator}
\begin{split}
& \infinitoperatorMC \valuefn (t, \mathbf{x}, i) \\
& = r(t,\mathbf{x},i) \sum_{n=1}^{\BMdim} x_{n} \frac{\partial \valuefn}{\partial x_{n}} (t, \mathbf{x}, i) + \frac{1}{2} \sum_{n,m=1}^{\BMdim} \vvolT_{n} (t,\mathbf{x},i)  \vvol_{m} (t,\mathbf{x},i) x_{n} x_{m} \frac{\partial^{2} \valuefn}{\partial x_{n} x_{m}} (t, \mathbf{x}, i) \\
& + \sum_{\substack{j=1, \\ j \neq i}}^{\markovdim} \Pmarkovgenerator_{ij} (1 + \RNforMC_{ij} (t, \mathbf{x}) ) \left( \valuefn (t, \mathbf{x}, j) - \valuefn (t, \mathbf{x}, i) \right),
\end{split}
\end{equation}
for all $(t,\mathbf{x},i) \in \timeset \times \realnumbersN \times \markovspace$.

\begin{defn} \label{DEFNPIDEtosolveupper}
Given a \gd bound $\gooddealbnd$, the \emph{upper \gd function} for the bound $\gooddealbnd$ is the solution to the following boundary value problem
\begin{align} \label{EQNBVP}
\frac{\partial \valuefn}{\partial t} (t, \mathbf{x}, i) + \sup_{(\vRNforBM, \vRNforMC)} \left\{ \infinitoperatorMC \valuefn (t, \mathbf{x}, i) \right\} - r (t,\mathbf{x},i) \valuefn (t, \mathbf{x}, i) & = 0 \\
\valuefn (\timehorizon, \mathbf{x}, i) & = \contingentfn (\mathbf{x}, i), \notag
\end{align}
where $\infinitoperatorMC$ is given by (\ref{EQNinfinitoperator}) and the supremum is taken over all functions $(\vRNforBM, \vRNforMC)$ subject to Condition \ref{CONDformofkernel} and satisfying
\begin{equation} \label{EQNHJBBMstd}
 \vRNforBM (t, \mathbf{x}, i) = - \left( \vvolT (t, \mathbf{x}, i) \right)^{-1} \left( \vmeanret (t, \mathbf{x}, i) - r (t, \mathbf{x}, i) \mathbf{1} \right),
\end{equation}
\begin{equation} \label{EQNHJBMCstd}
\RNforMC_{ij} (t,\mathbf{x}) \geq -1, \qquad \textrm{for $j=1, \ldots, \markovdim$, $j \neq i$},
\end{equation}
and
\begin{equation} \label{EQNHJBMCbound}
\lVert \vRNforBM (t, \mathbf{x}, i) \rVert^{2} + \sum_{\substack{j=1, \\ j \neq i}}^{\markovdim} \markovgenerator_{ij} \lvert \RNforMC_{ij} (t,\mathbf{x}) \rvert^{2} \leq \gooddealbnd,
\end{equation}
for all $(t, \mathbf{x}, i) \in \timeset \times \realnumbersN \times \markovspace$.  We denote the solution to (\ref{EQNBVP}) by $\valuefnupperii$.
\end{defn}

\begin{defn} \label{DEFNPIDEtosolvelower}
The \emph{lower \gd function} is the solution to (\ref{EQNBVP}) but with the supremum replaced by an infimum, subject to Condition \ref{CONDformofkernel} and the constraints (\ref{EQNHJBBMstd}) - (\ref{EQNHJBMCbound}).    We denote this solution by $\valuefnlowerii$.
\end{defn}
Rather than attempting to solve the partial integro-differential equation (``PIDE'') of (\ref{EQNBVP}) directly, we reduce it to two deterministic problems which we solve for each fixed triple $(t, \mathbf{x}, i) \in \timeset \times \realnumbersN \times \markovspace$.  Moreover, as $\vRNforBM$ is completely determined by (\ref{EQNHJBBMstd}), we need to solve only for the optimal $\vRNforMC$.

Therefore, given $\vRNforBM$ satisfying (\ref{EQNHJBBMstd}), we do the following.
\begin{enumerate}
 \item Solve the static optimization problem of finding the optimal $\vRNforMCopt$ in
\begin{displaymath}
 \sup_{(\vRNforBM, \vRNforMC)} \left\{ \infinitoperatorMC \valuefn (t, \mathbf{x}, i) \right\},
\end{displaymath}
subject to the constraints (\ref{EQNHJBMCstd}) and (\ref{EQNHJBMCbound}).
\item Using the optimal $\vRNforMCopt$ found above, solve the PIDE
\begin{align} \label{EQNpidetop}
\frac{\partial \valuefn}{\partial t} + \infinitoperatorMCopt \valuefn  - r \valuefn & = 0 \\ 
\valuefn (\timehorizon, \mathbf{x}, i) & = \contingentfn (\mathbf{x}, i).  \label{EQNpidebot}
\end{align}
\end{enumerate}
We consider in more detail how to solve the static optimization problem.  To solve the PIDE, we can use numerical methods.  A concrete example of this, where we find the \gd bounds for a European call option, is given in Section \ref{SECeurocall}.

\subsection{The static optimization problem}
As we have seen above, the static optimization problem associated with the upper \gd function of Definition \ref{DEFNPIDEtosolveupper} is to find for each triple $(t, \mathbf{x}, i) \in \timeset \times \realnumbersN \times \markovspace$ the optimal $\vRNforMCopt$ that attains the supremum of
\begin{equation} \label{EQNsupinfinitgen}
\begin{split}
& \infinitoperatorMC \valuefn (t, \mathbf{x}, i) \\
& = r(t,\mathbf{x},i) \sum_{n=1}^{\BMdim} x_{n} \frac{\partial \valuefn}{\partial x_{n}} (t, \mathbf{x}, i) + \frac{1}{2} \sum_{n,m=1}^{\BMdim} \vvolT_{n} (t,\mathbf{x},i)  \vvol_{m} (t,\mathbf{x},i) x_{n} x_{m} \frac{\partial^{2} \valuefn}{\partial x_{n} x_{m}} (t, \mathbf{x}, i) \\
& + \sum_{\substack{j=1, \\ j \neq i}}^{\markovdim} \Pmarkovgenerator_{ij} (1 + \RNforMC_{ij} (t, \mathbf{x}) ) \left( \valuefn (t, \mathbf{x}, j) - \valuefn (t, \mathbf{x}, i) \right),
\end{split}
\end{equation}
subject to the constraints
\begin{equation} \label{EQNHJBMCstdstatic}
\RNforMC_{ij} (t,\mathbf{x}) \geq -1, \quad \forall j \neq i \quad \textrm{and} \quad \sum_{\substack{j=1, \\ j \neq i}}^{\markovdim} \Pmarkovgenerator_{ij} \lvert \RNforMC_{ij} (t,\mathbf{x}) \rvert^{2} \leq \gooddealbnd - \lVert \vRNforBM (t, \mathbf{x}, i) \rVert^{2},
\end{equation}
with $\vRNforBM (t, \mathbf{x}, i)$ given by (\ref{EQNHJBBMstd}).

The static optimization problem associated with the lower \gd function of Definition \ref{DEFNPIDEtosolvelower} is as for the upper \gd function but taking the infimum of (\ref{EQNsupinfinitgen}), rather than the supremum.

As the only term in (\ref{EQNsupinfinitgen}) which involves $\vRNforMCopt$ is the last one, we can equivalently consider the problem of finding the optimal $\vRNforMCopt$ which attains the supremum of
\begin{equation} \label{EQNreducedstatic}
\sum_{\substack{j=1, \\ j \neq i}}^{\markovdim} \Pmarkovgenerator_{ij} (1 + \RNforMC_{ij} (t, \mathbf{x}) ) \left( \valuefn (t, \mathbf{x}, j) - \valuefn (t, \mathbf{x}, i) \right),
\end{equation}
subject to the constraints in (\ref{EQNHJBMCstdstatic}).  This is a linear optimization problem with both linear and quadratic constraints.  We consider how the complexity of this problem increases as the number of states $\markovdim$ of the Markov chain increases.

\subsubsection{Markov chain with two states}
When there are only two states of the Markov chain, the solution of the linear optimization problem (\ref{EQNreducedstatic}) subject to the constraints (\ref{EQNHJBMCstdstatic}) is very simple indeed and can be obtained by considering the sign of $\valuefn (t, \mathbf{x}, j) - \valuefn (t, \mathbf{x}, i)$ in (\ref{EQNreducedstatic}).
\begin{lma} \label{LMAstaticoptsoln}
For a $2$-state Markov chain, fix $(t, \mathbf{x}, i) \in \timeset \times \realnumbersN \times \{1,2\}$ and define
\begin{displaymath}
  \genconstraint (t, \mathbf{x}, i)  := \left( \frac{ \gooddealbnd - \lVert  \vRNforBM (t, \mathbf{x}, i) \rVert^{2} }{ - \Pmarkovgenerator_{ii} } \right)^{1/2}.
\end{displaymath}
Then for each $j \neq i$, the solution to the static optimization problem associated with the upper \gd function of Definition \ref{DEFNPIDEtosolveupper} is
\begin{displaymath}
\RNforMCopt_{ij}^{\textrm{upp}} (t, \mathbf{x}) = 
\left\{ \begin{array}{ll}
\genconstraint (t, \mathbf{x}, i) & \textrm{if $\valuefn (t, \mathbf{x}, j) - \valuefn (t, \mathbf{x}, i) > 0$} \\
- \min \left[ 1, \genconstraint (t, \mathbf{x}, i) \right] & \textrm{if $\valuefn (t, \mathbf{x}, j) - \valuefn (t, \mathbf{x}, i) \leq 0$},
\end{array}
\right.
\end{displaymath}
and the solution to the static optimization problem associated with the lower \gd function of Definition \ref{DEFNPIDEtosolvelower} is
\begin{displaymath}
\RNforMCopt_{ij}^{\textrm{low}} (t, \mathbf{x}) = 
\left\{ \begin{array}{ll}
 - \min \left[ 1, \genconstraint (t, \mathbf{x}, i) \right] & \textrm{if $\valuefn (t, \mathbf{x}, j) - \valuefn (t, \mathbf{x}, i) > 0$} \\
\genconstraint (t, \mathbf{x}, i) & \textrm{if $\valuefn (t, \mathbf{x}, j) - \valuefn (t, \mathbf{x}, i) \leq 0$}.
\end{array}
\right.
\end{displaymath}
\end{lma}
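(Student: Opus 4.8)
The plan is to exploit the fact that, for a two-state Markov chain, fixing the current state $i$ leaves exactly one admissible target state $j \neq i$, so the static optimization collapses to a one-dimensional problem in the scalar $\RNforMC_{ij}(t,\mathbf{x})$. The objective (\ref{EQNreducedstatic}) then reduces to the single term $\Pmarkovgenerator_{ij}(1 + \RNforMC_{ij}(t,\mathbf{x}))(\valuefn(t,\mathbf{x},j) - \valuefn(t,\mathbf{x},i))$, which is affine in $\RNforMC_{ij}$. First I would record the generator identity: since $\markovdim = 2$ and $\Pmarkovgenerator_{ii} = -\sum_{k \neq i} \Pmarkovgenerator_{ik} = -\Pmarkovgenerator_{ij}$, we have $\Pmarkovgenerator_{ij} = -\Pmarkovgenerator_{ii} > 0$. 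Hence the slope of the affine objective, viewed as a function of $\RNforMC_{ij}$, has the same sign as $\valuefn(t,\mathbf{x},j) - \valuefn(t,\mathbf{x},i)$.

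Next I would translate the constraints (\ref{EQNHJBMCstdstatic}) into an explicit interval for $\RNforMC_{ij}$. The quadratic constraint reads $\Pmarkovgenerator_{ij} \lvert \RNforMC_{ij} \rvert^2 \leq \gooddealbnd - \lVert \vRNforBM(t,\mathbf{x},i)\rVert^2$, and dividing through by $\Pmarkovgenerator_{ij} = -\Pmarkovgenerator_{ii} > 0$ gives exactly $\lvert \RNforMC_{ij} \rvert \leq \genconstraint(t,\mathbf{x},i)$, i.e. $\RNforMC_{ij} \in [-\genconstraint, \genconstraint]$. Intersecting with the sign constraint $\RNforMC_{ij} \geq -1$ yields the feasible interval $[\max(-1, -\genconstraint), \genconstraint] = [-\min(1, \genconstraint), \genconstraint]$, where the final equality follows by separately comparing the cases $\genconstraint \geq 1$ and $\genconstraint < 1$.

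Finally, since the objective is affine in $\RNforMC_{ij}$ over a compact interval, its extrema are attained at the endpoints, and the choice of endpoint is dictated by the sign of the slope. For the upper \gd function (supremum), when $\valuefn(t,\mathbf{x},j) - \valuefn(t,\mathbf{x},i) > 0$ the slope is positive and the maximizer is the right endpoint $\genconstraint$, whereas when $\valuefn(t,\mathbf{x},j) - \valuefn(t,\mathbf{x},i) \leq 0$ the slope is nonpositive and the maximizer is the left endpoint $-\min(1, \genconstraint)$; this reproduces $\RNforMCopt_{ij}^{\textrm{upp}}$. The lower \gd function (infimum) is handled identically with the roles of the two endpoints exchanged, giving $\RNforMCopt_{ij}^{\textrm{low}}$. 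There is no genuine obstacle here; the only points needing a little care are the reduction of $\max(-1, -\genconstraint)$ to $-\min(1, \genconstraint)$ and the degenerate case $\valuefn(t,\mathbf{x},j) = \valuefn(t,\mathbf{x},i)$, in which the objective is constant in $\RNforMC_{ij}$ so that every feasible value, and in particular the stated one, is optimal.
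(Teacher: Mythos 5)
Your proof is correct and follows exactly the route the paper indicates: the paper offers no formal proof, remarking only that the solution ``can be obtained by considering the sign of $\valuefn (t, \mathbf{x}, j) - \valuefn (t, \mathbf{x}, i)$'' in the linear objective (\ref{EQNreducedstatic}), which is precisely your argument of maximizing/minimizing an affine function of the single variable $\RNforMC_{ij}$ over the feasible interval $[-\min(1, \genconstraint), \genconstraint]$. Your write-up simply makes explicit the details the paper leaves implicit, including the identity $\Pmarkovgenerator_{ij} = -\Pmarkovgenerator_{ii} > 0$ and the degenerate case of equal value-function entries.
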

Thus the solutions $\RNforMCopt_{ij}^{\textrm{low}} (t, \mathbf{x})$ and $\RNforMCopt_{ij}^{\textrm{upp}} (t, \mathbf{x})$ depend on the value function $\valuefn$.   This means that the numerical solution to the PIDE (\ref{EQNpidetop})-(\ref{EQNpidebot}) involves checking at each node of the discretized state space the relative values of the value function in order to choose the appropriate solution. 

\subsubsection{Markov chain with three or more states}
For a Markov chain with more than two states, the solution becomes more complicated because the number of constraints increases.  If there are $\markovdim$ states then for each fixed triple $(t, \mathbf{x}, i) \in \timeset \times \realnumbersN \times \markovspace$ there are $\markovdim-1$ variables $\{ \RNforMC_{ij} (t, \mathbf{x}) : j=1, \ldots, \markovdim, j \neq i \}$ to find, each of which is subject to a lower and upper inequality constraint.  Hence there are $2^{\markovdim-1}$ potential solutions, depending on which of the lower and upper constraints is binding.  To see how the complexity increases, we consider a Markov chain with three states.  For each $x \in \realnumbers$, denote by $\sgn(x)$ the sign of $x$.  
\begin{lma} \label{LMAstaticoptsolnthree}
For a $3$-state Markov chain, fix $(t, \mathbf{x}, i) \in \timeset \times \realnumbersN \times \{1,2,3 \}$ and define for each $j \in \{1,2,3 \}$, $j \neq i$,
\begin{displaymath}
  \genconstraint (t, \mathbf{x}, i, j)  := \left( \frac{ \gooddealbnd - \lVert  \vRNforBM (t, \mathbf{x}, i) \rVert^{2} + \Pmarkovgenerator_{ii} + \Pmarkovgenerator_{ij} }{ - \Pmarkovgenerator_{ij} } \right)^{1/2}.
\end{displaymath}
Then for each $j \neq i$ and $k \neq i$, $k \neq j$, the solution $( \RNforMCopt_{ij}^{\textrm{upp}} (t, \mathbf{x}), \RNforMCopt_{ik}^{\textrm{upp}} (t, \mathbf{x}) )$ to the static optimization problem associated with the upper \gd function of Definition \ref{DEFNPIDEtosolveupper} is one of the following pairs:
\begin{equation} \label{EQNuppthreeMCone}
(-1, \, -1),
\end{equation}
\begin{equation} \label{EQNuppthreeMCtwo}
(\sgn \left( \valuefn (t, \mathbf{x}, j) - \valuefn (t, \mathbf{x}, i) \right) \genconstraint (t, \mathbf{x}, i, j), \, -1),
\end{equation}
\begin{equation} \label{EQNuppthreeMCthree}
(-1, \, \sgn \left( \valuefn (t, \mathbf{x}, k) - \valuefn (t, \mathbf{x}, i) \right)\genconstraint (t, \mathbf{x}, i, k)),
\end{equation}
\begin{equation} \label{EQNuppthreeMCfour}
\begin{split}
\Bigg( & \left( \valuefn (t, \mathbf{x}, j) - \valuefn (t, \mathbf{x}, i) \right) \left( \frac{ \gooddealbnd - \lVert  \vRNforBM (t, \mathbf{x}, i) \rVert^{2} }{ \sum_{\ell = 1}^{\markovdim} \Pmarkovgenerator_{i \ell} \left(  \valuefn (t, \mathbf{x}, \ell) - \valuefn (t, \mathbf{x}, i) \right)^{2} } \right)^{1/2}, \\
& \left( \valuefn (t, \mathbf{x}, k) - \valuefn (t, \mathbf{x}, i) \right) \left( \frac{ \gooddealbnd - \lVert  \vRNforBM (t, \mathbf{x}, i) \rVert^{2} }{  \sum_{\ell = 1}^{\markovdim} \Pmarkovgenerator_{i \ell} \left(  \valuefn (t, \mathbf{x}, \ell) - \valuefn (t, \mathbf{x}, i) \right)^{2} } \right)^{1/2} \Bigg).
\end{split}
\end{equation}
and the solution $( \RNforMCopt_{ij}^{\textrm{low}} (t, \mathbf{x}), \RNforMCopt_{ik}^{\textrm{low}} (t, \mathbf{x}) )$ to the static optimization problem associated with the lower \gd function of Definition \ref{DEFNPIDEtosolvelower} is one of the following pairs:
\begin{equation} \label{EQNlowthreeMCone}
(-1, \, -1),
\end{equation}
\begin{equation} \label{EQNlowthreeMCtwo}
(- \sgn \left( \valuefn (t, \mathbf{x}, j) - \valuefn (t, \mathbf{x}, i) \right) \genconstraint (t, \mathbf{x}, i, j), \, -1),
\end{equation}
\begin{equation} \label{EQNlowthreeMCthree}
(-1, \, - \sgn \left( \valuefn (t, \mathbf{x}, k) - \valuefn (t, \mathbf{x}, i) \right)\genconstraint (t, \mathbf{x}, i, k)),
\end{equation}
\begin{equation} \label{EQNlowthreeMCfour}
\begin{split}
\Bigg( & - \left( \valuefn (t, \mathbf{x}, j) - \valuefn (t, \mathbf{x}, i) \right) \left( \frac{ \gooddealbnd - \lVert  \vRNforBM (t, \mathbf{x}, i) \rVert^{2} }{ \sum_{\ell = 1}^{\markovdim} \Pmarkovgenerator_{i \ell} \left(  \valuefn (t, \mathbf{x}, \ell) - \valuefn (t, \mathbf{x}, i) \right)^{2} } \right)^{1/2}, \\
& - \left( \valuefn (t, \mathbf{x}, k) - \valuefn (t, \mathbf{x}, i) \right) \left( \frac{ \gooddealbnd - \lVert  \vRNforBM (t, \mathbf{x}, i) \rVert^{2} }{  \sum_{\ell = 1}^{\markovdim} \Pmarkovgenerator_{i \ell} \left(  \valuefn (t, \mathbf{x}, \ell) - \valuefn (t, \mathbf{x}, i) \right)^{2} } \right)^{1/2} \Bigg).
\end{split}
\end{equation}
\end{lma}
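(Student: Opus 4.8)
The plan is to treat the static problem (\ref{EQNreducedstatic})--(\ref{EQNHJBMCstdstatic}) as a linear program with a single convex quadratic constraint in the two unknowns $\RNforMC_{ij}$ and $\RNforMC_{ik}$ (recall $\RNforMC_{ii}=0$ by Condition \ref{CONDformofkernel}). First I would discard the parts of (\ref{EQNsupinfinitgen}) that are independent of $\vRNforMC$ and, dropping the constant $\sum_{j\neq i}\markovgenerator_{ij}(\valuefn(t,\mathbf{x},j)-\valuefn(t,\mathbf{x},i))$ coming from the ``$1$'' in (\ref{EQNreducedstatic}), reduce the upper problem to maximising the linear functional
\[
L(\RNforMC_{ij},\RNforMC_{ik}) := \markovgenerator_{ij}\,\RNforMC_{ij}\,(\valuefn(t,\mathbf{x},j)-\valuefn(t,\mathbf{x},i)) + \markovgenerator_{ik}\,\RNforMC_{ik}\,(\valuefn(t,\mathbf{x},k)-\valuefn(t,\mathbf{x},i))
\]
over the feasible set of (\ref{EQNHJBMCstdstatic}), i.e.\ the intersection of the half-planes $\RNforMC_{ij}\geq-1$, $\RNforMC_{ik}\geq-1$ with the elliptical disc $\markovgenerator_{ij}\RNforMC_{ij}^{2}+\markovgenerator_{ik}\RNforMC_{ik}^{2}\leq\gooddealbnd-\lVert\vRNforBM(t,\mathbf{x},i)\rVert^{2}$. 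Since $\markovgenerator_{ij},\markovgenerator_{ik}>0$ this set is compact and convex, so $L$ attains its maximum on the boundary.

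The key step is to classify the maximisers by their active constraints. As $L$ is linear and the feasible set $K$ is compact convex, its set of maximisers is a face of $K$, and I would show the maximum is always attained at one of four explicit configurations. Writing out the KKT conditions, the admissible active sets are: both lower bounds active, giving the corner $(-1,-1)$ of (\ref{EQNuppthreeMCone}); the quadratic constraint together with $\RNforMC_{ik}=-1$, giving (\ref{EQNuppthreeMCtwo}); the quadratic constraint together with $\RNforMC_{ij}=-1$, giving (\ref{EQNuppthreeMCthree}); and only the quadratic constraint, giving the tangency point (\ref{EQNuppthreeMCfour}). A configuration with a single lower bound active and the quadratic slack traces a straight edge of $K$ on which the linear $L$ is maximised at an endpoint, hence yields no new candidate, while a point with no constraint active lies in the interior where a non-constant linear functional has no maximiser; this is precisely why only the four listed pairs occur. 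For the tangency case I would solve $\nabla L = \mu\,\nabla(\markovgenerator_{ij}\RNforMC_{ij}^{2}+\markovgenerator_{ik}\RNforMC_{ik}^{2})$ with $\mu>0$ on the ellipse, giving $\RNforMC_{ij}\propto(\valuefn(t,\mathbf{x},j)-\valuefn(t,\mathbf{x},i))$ and recognising $\sum_{\ell}\markovgenerator_{i\ell}(\valuefn(t,\mathbf{x},\ell)-\valuefn(t,\mathbf{x},i))^{2}$ as the normalising denominator (the $\ell=i$ term vanishes). For the mixed cases I would substitute the active lower bound into the quadratic equality and use the identity $\markovgenerator_{ii}+\markovgenerator_{ij}=-\markovgenerator_{ik}$ to rewrite the resulting value of $\RNforMC_{ij}^{2}$ as $\genconstraint(t,\mathbf{x},i,j)^{2}$, the sign being fixed by $\sgn(\valuefn(t,\mathbf{x},j)-\valuefn(t,\mathbf{x},i))$ so as to increase $L$.

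The hard part will be the exhaustiveness argument together with the degenerate cases that accompany it: when the disc is too small for $(-1,-1)$ to be feasible, when a tangency or line--ellipse intersection point falls outside a half-plane (so that candidate is simply discarded as infeasible), when some $\valuefn(t,\mathbf{x},j)-\valuefn(t,\mathbf{x},i)=0$ so that $\sgn$ is immaterial, and the verification of a constraint qualification at the corner (where three constraints can be simultaneously active in two dimensions). Each is routine but must be checked to justify reading off the optimum as the best \emph{feasible} member of the four-element candidate list. Finally, the lower \gd function of Definition \ref{DEFNPIDEtosolvelower} follows at once from $\inf L = -\sup(-L)$: replacing each $\valuefn(t,\mathbf{x},\cdot)-\valuefn(t,\mathbf{x},i)$ by its negative flips every sign in the optimiser, carrying the upper candidates (\ref{EQNuppthreeMCone})--(\ref{EQNuppthreeMCfour}) into (\ref{EQNlowthreeMCone})--(\ref{EQNlowthreeMCfour}).
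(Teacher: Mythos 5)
Your proposal is, in substance, exactly the paper's proof: the paper's entire argument for this lemma is the single line ``Apply the Kuhn-Tucker method,'' and your reduction to a linear objective over the intersection of two half-planes and an elliptical disc, your active-set enumeration (corner, two mixed cases, pure tangency), the Lagrange computation at the tangency point, and the passage to the lower bound via $\inf = -\sup(-\,\cdot\,)$ are a correct and essentially complete execution of that method.

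However, one step does not go through as you assert it. In the mixed case with $\RNforMC_{ik}=-1$ active, substituting into the quadratic equality gives
\begin{displaymath}
\RNforMC_{ij}^{2} \;=\; \frac{\gooddealbnd - \lVert \vRNforBM (t,\mathbf{x},i) \rVert^{2} - \Pmarkovgenerator_{ik}}{\Pmarkovgenerator_{ij}}
\;=\; \frac{\gooddealbnd - \lVert \vRNforBM (t,\mathbf{x},i) \rVert^{2} + \Pmarkovgenerator_{ii} + \Pmarkovgenerator_{ij}}{\Pmarkovgenerator_{ij}},
\end{displaymath}
using $\Pmarkovgenerator_{ii}+\Pmarkovgenerator_{ij}+\Pmarkovgenerator_{ik}=0$. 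The denominator here is $+\Pmarkovgenerator_{ij}$, whereas the lemma's $\genconstraint (t, \mathbf{x}, i, j)$ is defined with denominator $-\Pmarkovgenerator_{ij}$, which is nonpositive since off-diagonal generator entries are nonnegative. So your claim that the identity $\Pmarkovgenerator_{ii}+\Pmarkovgenerator_{ij}=-\Pmarkovgenerator_{ik}$ lets you ``rewrite the resulting value of $\RNforMC_{ij}^{2}$ as $\genconstraint(t,\mathbf{x},i,j)^{2}$'' is false as stated: the two quantities differ by a sign, and generically exactly one of them is nonnegative. The discrepancy is almost certainly a typo in the lemma itself --- the minus sign in the denominator looks carried over from the two-state case, where the corresponding denominator $-\Pmarkovgenerator_{ii}$ is positive --- and your derivation yields the correct (feasible, nonnegative) value. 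But a proof cannot silently identify two expressions of opposite sign: you should either flag the typo and prove the corrected formula, or concede that the statement as literally written is not what your argument establishes. Everything else --- the exhaustiveness of the four active sets, discarding candidates that violate a slack constraint, the degenerate cases you list, and the sign-flip argument carrying the upper candidates to the lower ones --- is sound.
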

\begin{proof}
 Apply the Kuhn-Tucker method.
\end{proof}
\begin{rmk}
The solutions $( \RNforMCopt_{ij}^{\textrm{upp}} (t, \mathbf{x}), \RNforMCopt_{ik}^{\textrm{upp}} (t, \mathbf{x}) )$ and $( \RNforMCopt_{ij}^{\textrm{low}} (t, \mathbf{x}), \RNforMCopt_{ik}^{\textrm{low}} (t, \mathbf{x}) )$ depend on the value function $\valuefn$,  just as in the two-state Markov chain case.  However, the difficulty involved in solving the PIDE (\ref{EQNpidetop})-(\ref{EQNpidebot}) numerically has increased since there are four potential solutions which must be checked at each node of the discretized state space.  As we increase the number of states in the Markov chain, the number of potential solutions to the static optimization problem increases and hence the complexity involved in solving the PIDE increases too.
\end{rmk}

\section{Minimal martingale measure} \label{SECmminmartmeasure}
Here we leave aside the \gd bounds and consider the minimal martingale measure, which we consider as a benchmark measure for pricing any derivative in the market.
\begin{defn} \label{PROBmmm}
 The \emph{minimal martingale measure} is the measure $\Qmeasuremmm$ generated by $(\vRNforBMmmm, \vRNforMCmmm)$, where $(\vRNforBMmmm, \vRNforMCmmm)$ is the Girsanov kernel process which minimizes
 \begin{displaymath}
\lVert \vRNforBM (t) \rVert^{2} + \sum_{i=1}^{\markovdim} \sum_{\substack{j=1, \\ j \neq i}}^{\markovdim} \lvert \RNforMC_{ij} (t) \rvert^{2} \, \markovcompensator_{ij} (t)
\end{displaymath}
subject to the constraint $r(t,\mathbf{x},i) = \meanret_{n} (t,\mathbf{x},i) + \vRNforBM^{\top} (t,\mathbf{x},i) \vvol_{n} (t,\mathbf{x},i)$ for $n=1,\ldots,\BMdim$.
\end{defn}

It is immediate that the minimal martingale measure $\Qmeasuremmm$ is generated by
\begin{displaymath}
 \vRNforBMmmm (t) := - \left( \vvolT (t) \right)^{-1} \left( \vmeanret(t) - r(t) \mathbf{1} \right) \qquad \textrm{and} \qquad \RNforMCmmm_{ij} (t) := 0, \quad \forall j \neq i,
\end{displaymath}
for all $t \in \timeset$, where $\mathbf{1} \in \realnumbersN$ has all entries equal to one.  As $\RNforMCmmm_{ij} (t) \geq -1$, we have that $(\vRNforBMmmm, \vRNforMCmmm)$ is an admissible Girsanov kernel process.

\begin{rmk}
Under the measure $\Qmeasuremmm$, the process $\markovchain$ is a Markov chain with the same generator $\Markovgenerator = (\markovgenerator_{ij})_{i,j=1}^{\markovdim}$ as under the measure $\Pmeasure$.  In particular, this means that the measure $\Qmeasuremmm$ preserves the martingale property of the process $\markovmartingaleij (t)$ defined by (\ref{EQNcanonicalmarkovmartingale}), so that the $\Pmeasure$-martingales of $\markovchain$ are also its $\Qmeasuremmm$-martingales.
\end{rmk}

Notice that $(\vRNforBMmmm, \vRNforMCmmm)$ minimizes the right-hand side of (\ref{EQNSRextHKbnd}) over the set of admissible Girsanov kernel processes.  Moreover, by Definition \ref{DEFNgdb}, any \gd bound $\gooddealbnd$ satisfies $\gooddealbnd \geq \gooddealbnd_{0}$.  This means
\begin{displaymath}
\gooddealbnd \geq \gooddealbnd_{0} = \sup_{t \in \timeset} \lVert \vRNforBM (t) \rVert^{2} = \sup_{t \in \timeset} \lVert \left( \vRNforBMmmm (t), \vRNforMCmmm (t) \right) \rVert^{2}.
\end{displaymath}
Thus $(\vRNforBMmmm, \vRNforMCmmm)$ is a Girsanov kernel process which satisfies the \gd bound constraint in (\ref{EQNHJBMCbound}).

Denote the solution to the PIDE
\begin{align}
\frac{\partial \valuefn}{\partial t} + \infinitoperatormmm \valuefn - r \valuefn & = 0 \\
\valuefn (\timehorizon, \mathbf{x}, i) & = \contingentfn (\mathbf{x}, i)
\end{align}
by $\valuefnmmm$.  Then as $(\vRNforBMmmm, \vRNforMCmmm)$ is a Girsanov kernel process which satisfies the \gd bound constraint in (\ref{EQNHJBMCbound}), it is clear from this and Definitions \ref{DEFNPIDEtosolveupper} and \ref{DEFNPIDEtosolvelower} that the following relation holds:
\begin{displaymath}
\valuefnlowerii \leq \valuefnmmm \leq \valuefnupperii.
\end{displaymath}

\section{Numerical example} \label{SECeurocall}

Having applied the \gd bound idea in a regime-switching diffusion market, the next question is: are they useful?  We examine this question by calculating the upper and lower \gd pricing bounds for a $1$-year European call option in a market where there are two regimes.  We calculate them for various initial stock prices and for various choices of the \gd bound.  Finally, we examine how the pricing bounds change as we change the generator of the Markov chain which drives the regime-switching.
  
\subsection{Market model} \label{SUBSECmodelone}
Suppose that we are in a financial market setting of Section \ref{SECmarketmodel} with only two market regimes and one risky asset, so that $\BMdim = 1$, and time is measured in years.  Assume the values of the market parameters given in Table \ref{TABmktparam} and take the generator of the Markov chain to be
\begin{displaymath}
 \Markovgenerator = \begin{pmatrix} \markovgenerator_{11} & \markovgenerator_{12} \\ \markovgenerator_{21} & \markovgenerator_{22} \end{pmatrix} = \begin{pmatrix} - 0.5 & 0.5 \\ 5 & - 5 \end{pmatrix},
\end{displaymath}
These figures are based on the estimated parameters found in \citet{hardy01.article} for a $2$-state regime-switching model fitted to data from the S\&P 500, an index of 500 U.S. stocks.
 \begin{table}
 \caption{Market parameters}
  \centering
  \begin{tabular}{ | c| c | c | c | }
    \hline
    Regime $i$ & $r(i)$ & $\meanret(i)$ & $\vol (i)$ \\ \hline
     1 & 0.06 & 0.15 & 0.12 \\ \hline
     2 & 0.06 & -0.22 & 0.26 \\
    \hline
  \end{tabular}
\label{TABmktparam}
\end{table}
From the generator $\Markovgenerator$, we see that the average time spent in regime 1 is 2 years and the average time spent in regime 2 is about 2.5 months.

\subsection{Calculation and implementation}
We wish to calculate the upper and lower \gd pricing bounds for a European call option with maturity $\timehorizon=1$ and strike price $\strikeprice=100$.  To do this, we choose a \gd bound $\gooddealbnd$ and fix the initial market regime $\markovchain(0)=\markovinitialstate$ and initial stock price $\stockprice(0)$.  Then we calculate the upper and lower \gd functions of Definitions \ref{DEFNPIDEtosolveupper} and \ref{DEFNPIDEtosolvelower}.  In Section \ref{SECstochasticcontrol}, we saw that this involved first
\begin{itemize}
 \item solving the associated static optimization problem, and then
 \item numerically solving the PIDE (\ref{EQNpidetop})-(\ref{EQNpidebot}) using the solution to the static optimization problem.
\end{itemize}
We have already solved the static optimization problem for a 2-state regime-switching model, with the solution given by Lemma \ref{LMAstaticoptsoln}.  Thus it remains to numerically solve the PIDE
\begin{align}
\frac{\partial \valuefn}{\partial t} (t, x, i) + r(i) x \frac{\partial \valuefn}{\partial x} (t, x, i) + \frac{1}{2} \vol^{2} (i) x^{2} \frac{\partial^{2} \valuefn}{\partial x^{2}} (t, x, i) - r(i) \valuefn(t,x,i) & \notag \\ 
- \Pmarkovgenerator_{ii} (1 + \RNforMCopt_{ij} (t,x) ) \left( \valuefn (t, x, j) - \valuefn (t, x, i) \right) & = 0 \label{EQNMCcomponent} \\ 
\valuefn (\timehorizon, x, i)  = \max & [x-\strikeprice, 0], \notag
\end{align}
for $j \neq i$, using the optimal values $\RNforMCopt_{ij}$ which solve the static optimization problem.  Denoting the solution to the above PIDE by $\valuefnupperii$ for the upper \gd pricing bound and by $\valuefnlowerii$ for the lower \gd pricing bound, the \gd price range for the derivative is $(\valuefnlowerii(0), \valuefnupperii(0))$.  The intervals are open intervals due to the discussion in Remark \ref{RMKopenintervalgd}.

We implement the numerical solution of the PIDE using a fully implicit, finite-difference method, based on a grid which has the values
\begin{displaymath}
\Delta t = 0.01, \quad \Delta \stockprice = 0.5, \quad \stockprice_{\min} = 0, \quad \stockprice_{\max} = 200,
\end{displaymath}
where $\Delta t$ is the grid step-size in the time direction (measured in years), $\Delta \stockprice$ is the grid step-size in the stock price direction and $[\stockprice_{\min}, \stockprice_{\max}]$ is the grid range in the stock price direction.  The grid range in the time direction is $\timeset$.  We use the boundary conditions
\begin{displaymath}
 V(0,t) = 0 \qquad \textrm{and} \qquad V(\stockprice_{\max},t) = \stockprice_{\max} - \strikeprice e^{-r(\timehorizon - t)}, \quad \forall t \in \timeset.
\end{displaymath}
Note that by Definition \ref{DEFNgdb} and using the figures in Table \ref{TABmktparam} to calculate $\RNforBM (1)$ and $\RNforBM (2)$, the \gd bound $\gooddealbnd$ must satisfy
\begin{displaymath}
\gooddealbnd \geq \max [ \RNforBM^{2} (1), \RNforBM^{2} (2) ] = \max [ (-0.750)^{2}, (1.077)^{2} ] = 1.160.
\end{displaymath}

\subsection{Results} \label{SUBSECcallvarmat}
We begin by fixing the \gd bound $\gooddealbnd=1.2$ and calculating the upper and lower pricing bounds for a range of initial stock prices.  The results are shown in Figure \ref{FIGoneyrcallgdb}, with Figure \ref{FIGr1gdb12one} and \ref{FIGr2gdb12one} corresponding to the market starting in regime 1 and 2, respectively.  The middle line in each of the plots corresponds to the minimal martingale measure price, which is the benchmark price.  The plots show that, for the choice $\gooddealbnd=1.2$, the \gd pricing bounds are reasonably narrow and therefore they are potentially of practical use.
\begin{figure}[p]
\centering
\subfigure[Starting in regime 1.]
{ \label{FIGr1gdb12one}
\includegraphics[scale=0.65]{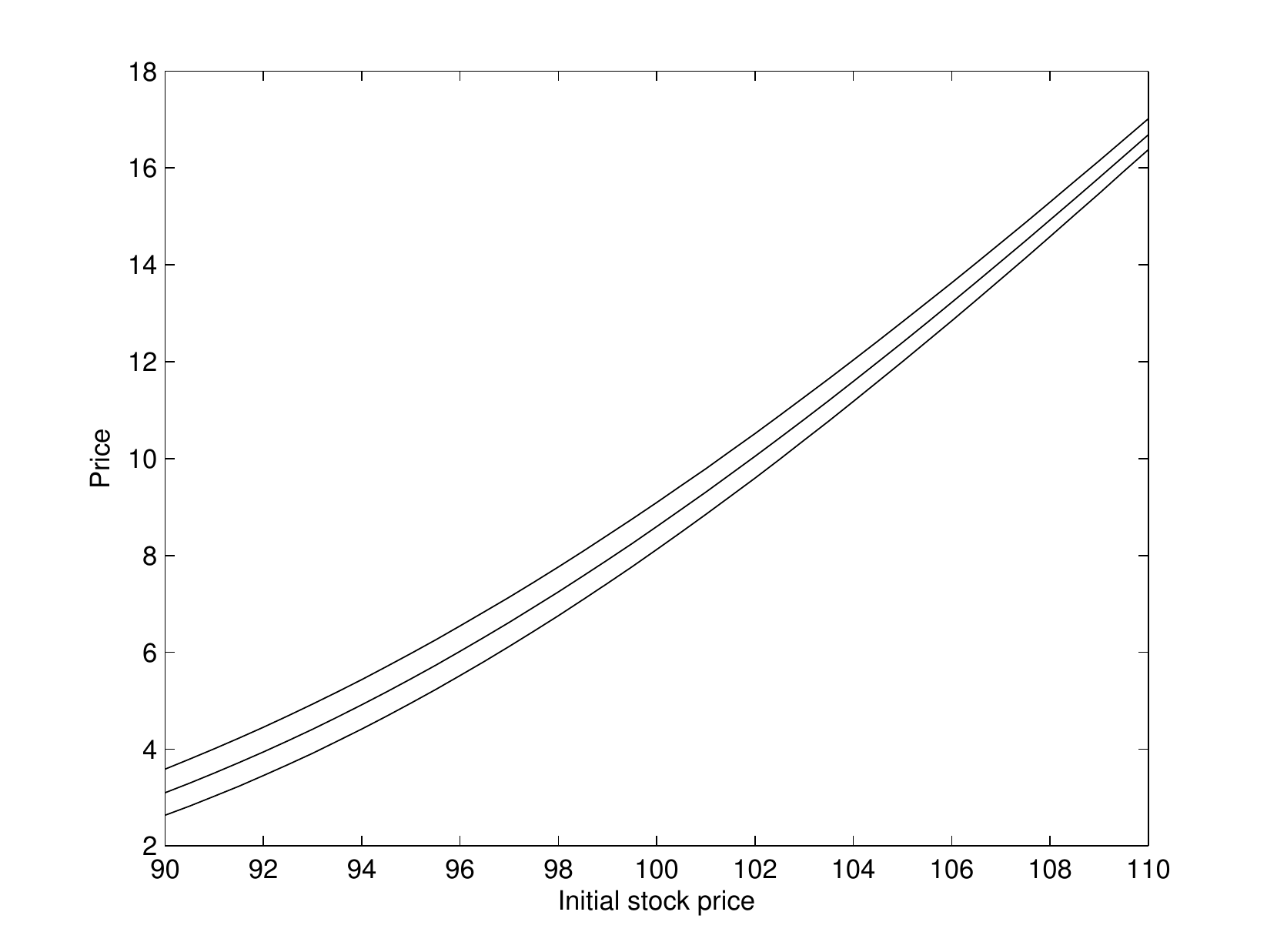}
}
\hspace{0.1cm}
\subfigure[Starting in regime 2.]
{ \label{FIGr2gdb12one}
\includegraphics[scale=0.65]{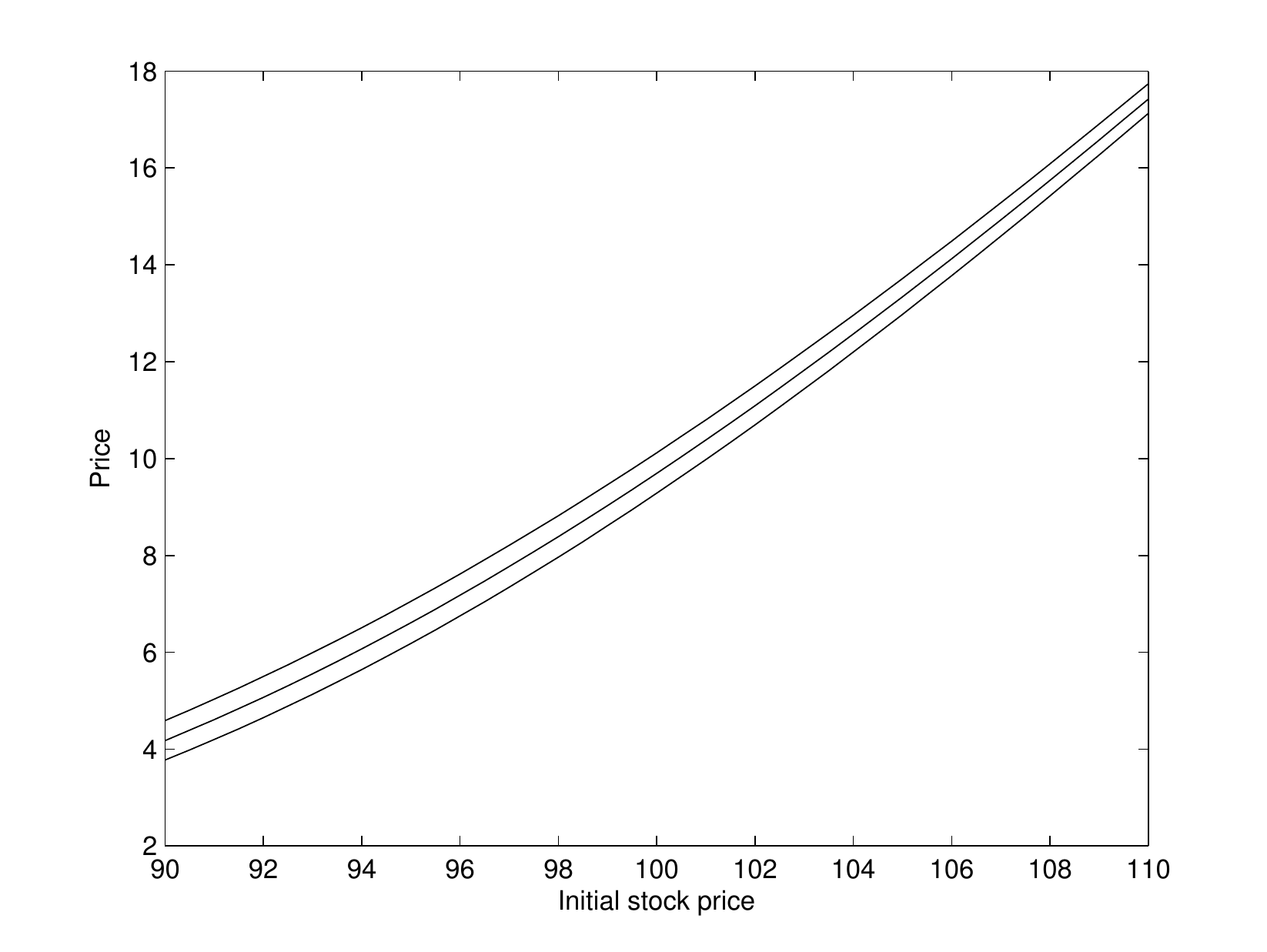}
}
\caption{The upper and lower \gd pricing bounds plotted against the initial stock price for a 1-year European call option with strike price $\strikeprice=100$ and \gd bound $\gooddealbnd=1.2$.  The upper curve on each plot is the upper \gd pricing bound, the lower curve is the lower \gd pricing bound and the middle line is the minimal martingale measure price.  The top plot assumes that the market is in regime 1 at time 0 and the bottom plot assumes that the market is in regime 2 at time 0.}
\label{FIGoneyrcallgdb}
\end{figure}
Next we examine exactly how the pricing bounds change as we vary the \gd bound $\gooddealbnd$.  We fix the initial stock price $\stockprice (0)=100$ and calculate the \gd pricing bounds for various choices of the \gd bound $\gooddealbnd$.  These results are shown in Figure \ref{FIGgdcalloption}, with Figure \ref{FIGr1S100one} and \ref{FIGr2S100one} corresponding to the market starting in regime 1 and 2, respectively.  Again, the minimal martingale measure prices are the horizontal lines in the middle of each plot.
\begin{figure}[p]
\centering
\subfigure[Starting in regime 1.]
{ \label{FIGr1S100one}
\includegraphics[scale=0.65]{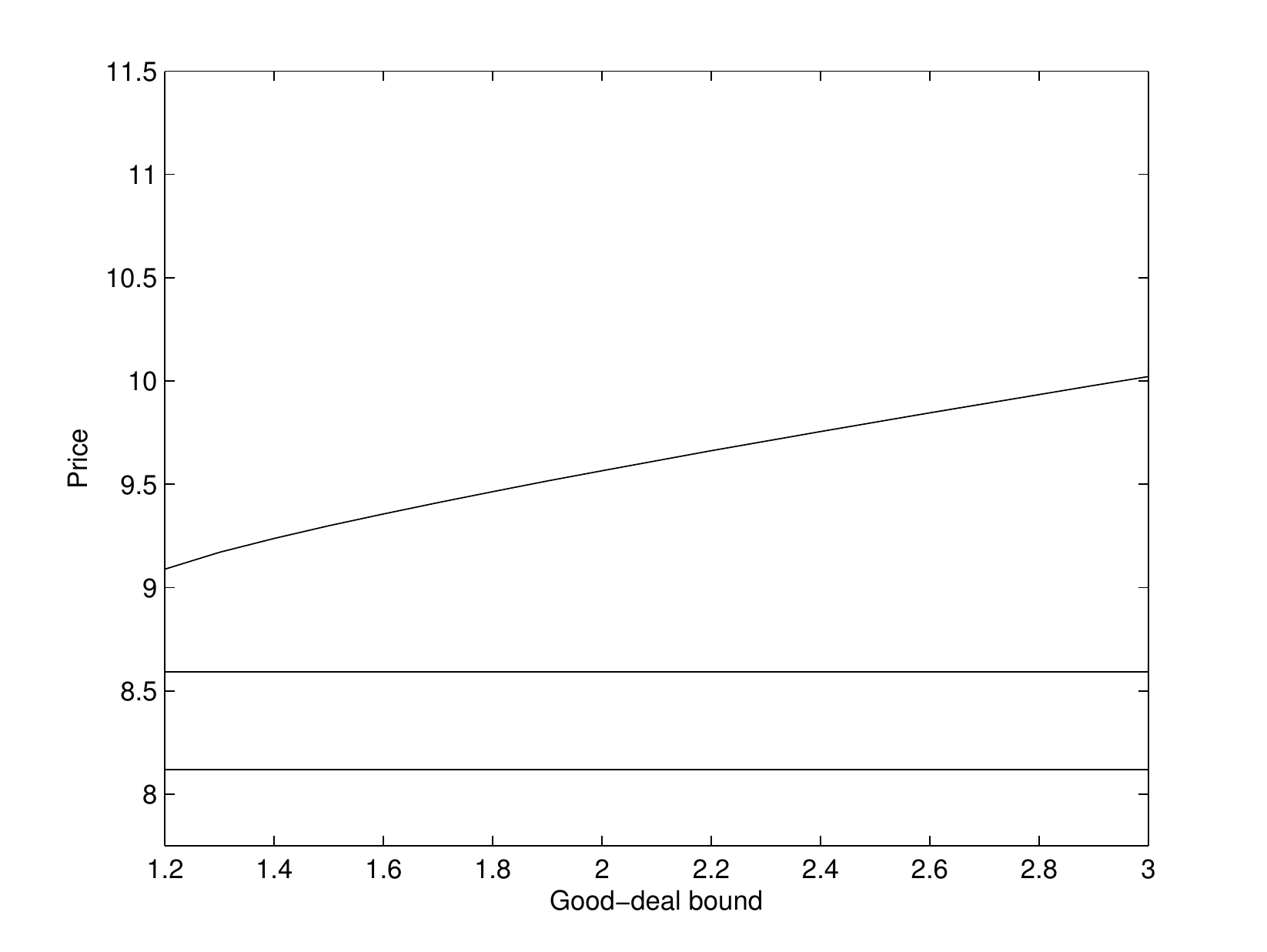}
}
\hspace{0.1cm}
\subfigure[Starting in regime 2.]
{ \label{FIGr2S100one}
\includegraphics[scale=0.65]{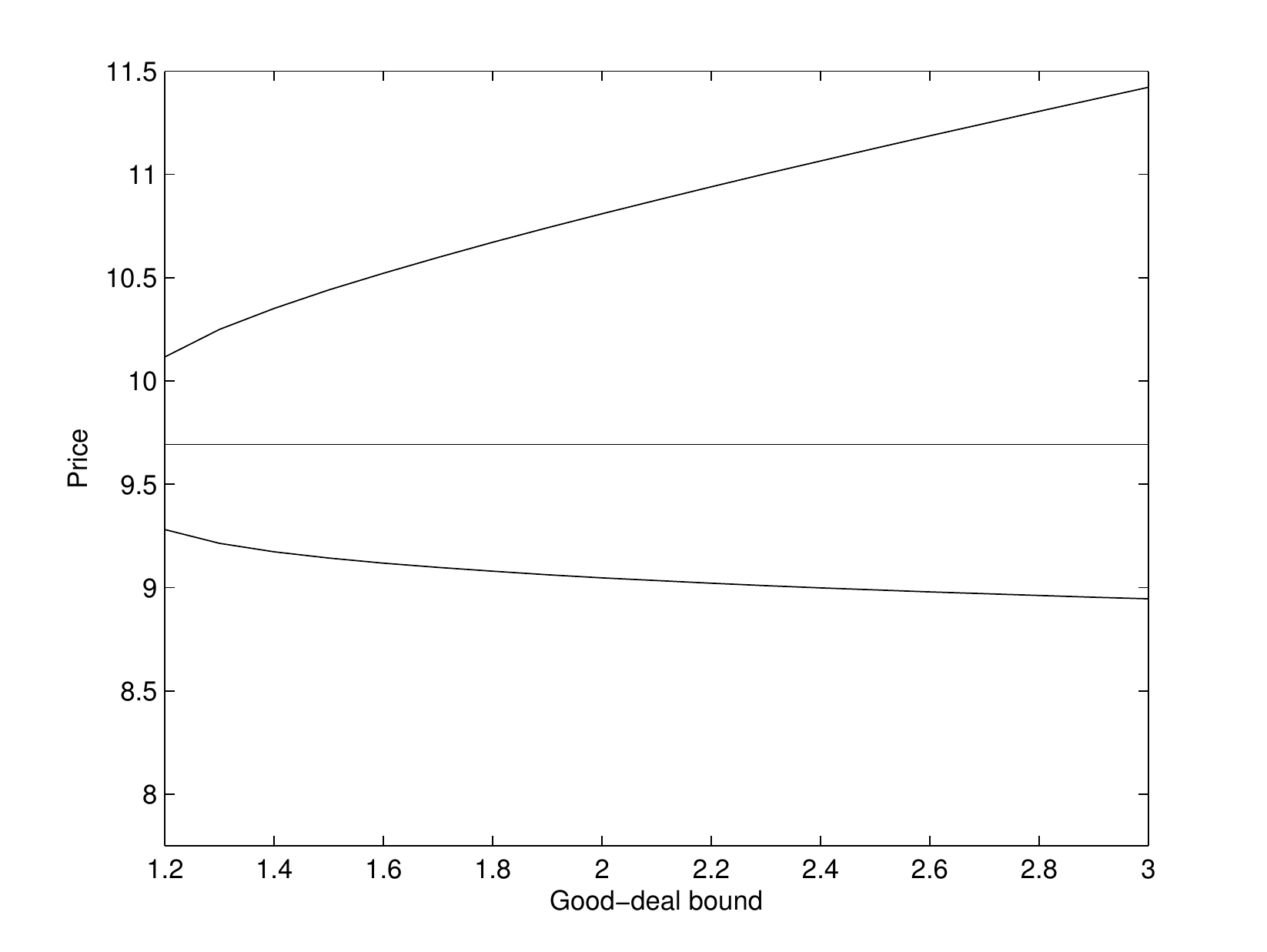}
}
\caption{The upper and lower \gd pricing bounds plotted against the \gd bound for a $1$-year European call option with strike price $\strikeprice=100$.  The initial stock price is $\stockprice (0)=100$.  The top plot assumes that the market is in regime 1 at time 0 and the bottom plot assumes that the market is in regime 2 at time 0.  On both plots, the minimal martingale measure price is the horizontal line in the middle.}
\label{FIGgdcalloption}
\end{figure}
The plots in Figure \ref{FIGgdcalloption} show that as we increase the \gd bound $\gooddealbnd$, we increase the upper \gd pricing bound.  However, while the lower \gd pricing bound decreases in Figure \ref{FIGr2S100one}, it is constant in Figure \ref{FIGr1S100one}.  The reason is that, in this particular market, the solution to the static optimization problem for the lower \gd function is always $\Qmarkovgeneratoropt_{12}^{\textrm{low}} (t, x) = -1$ when starting in regime 1, regardless of the value of the \gd bound $\gooddealbnd$.  Setting $\Qmarkovgeneratoropt_{12} (t, x) := -1$ in the PIDE (\ref{EQNMCcomponent}), we see immediately that the last term on the left-hand side vanishes and hence the PIDE reduces to the classical Black-Scholes formula for a European call option in a non-regime-switching market with market parameters $r(1)$, $b(1)$ and $\sigma(1)$.

\subsection{Stability of the \gd pricing bounds}
We base the market parameters in Table \ref{TABmktparam} on figures found in \citet{hardy01.article}.  However, the analysis in \citet{hardy01.article} gives a large standard error in the estimation of the Markov chain parameters.  This leads us to wonder what happens if we have mis-estimated the generator of the Markov chain.  Do we have stability of the \gd pricing bounds?  To examine this issue, we consider again a $1$-year European call option with strike price $\strikeprice=100$.  We find the \gd pricing bounds for this option for three different models, assuming that the price of the risky stock at time 0 is $\stockprice (0)=100$.  In each of the models, the market parameters $r$, $b$ and $\sigma$ are as in Table \ref{TABmktparam} but the diagonal elements of the generator $\Markovgenerator$ of the Markov chain are given in Table \ref{TABmodelgens}.
 \begin{table}
 \caption{Diagonal elements of the generator $\Markovgenerator$ of the Markov chain.}
  \centering
  \begin{tabular}{ | c | c | c | c |}
    \hline
     & Model 1 & Model 2 & Model 3 \\ \hline
	$-g_{11}$ & 0.5 & 0.333 & 0.667 \\
	$-g_{22}$ & 5 & 6 & 6 \\ \hline
Avg. time in regime 1 & 2.000 & 3.000 & 1.500 \\
 Avg. time in regime 2 & 0.200 & 0.167 & 0.167 \\
    \hline
  \end{tabular}
 \label{TABmodelgens}
 \end{table}
Note that Model 1 corresponds to the model described in Subsection \ref{SUBSECmodelone}.  The results are shown in Figure \ref{FIGmodelbounds}, with Figure \ref{FIGr1m123} and \ref{FIGr2m123} corresponding to the market starting in regime 1 and 2, respectively.  The middle horizontal lines in the plots correspond to the minimal martingale measure prices.

\begin{figure}[p]
\centering
\subfigure[Starting in regime 1.]
{ \label{FIGr1m123}
\includegraphics[scale=0.65]{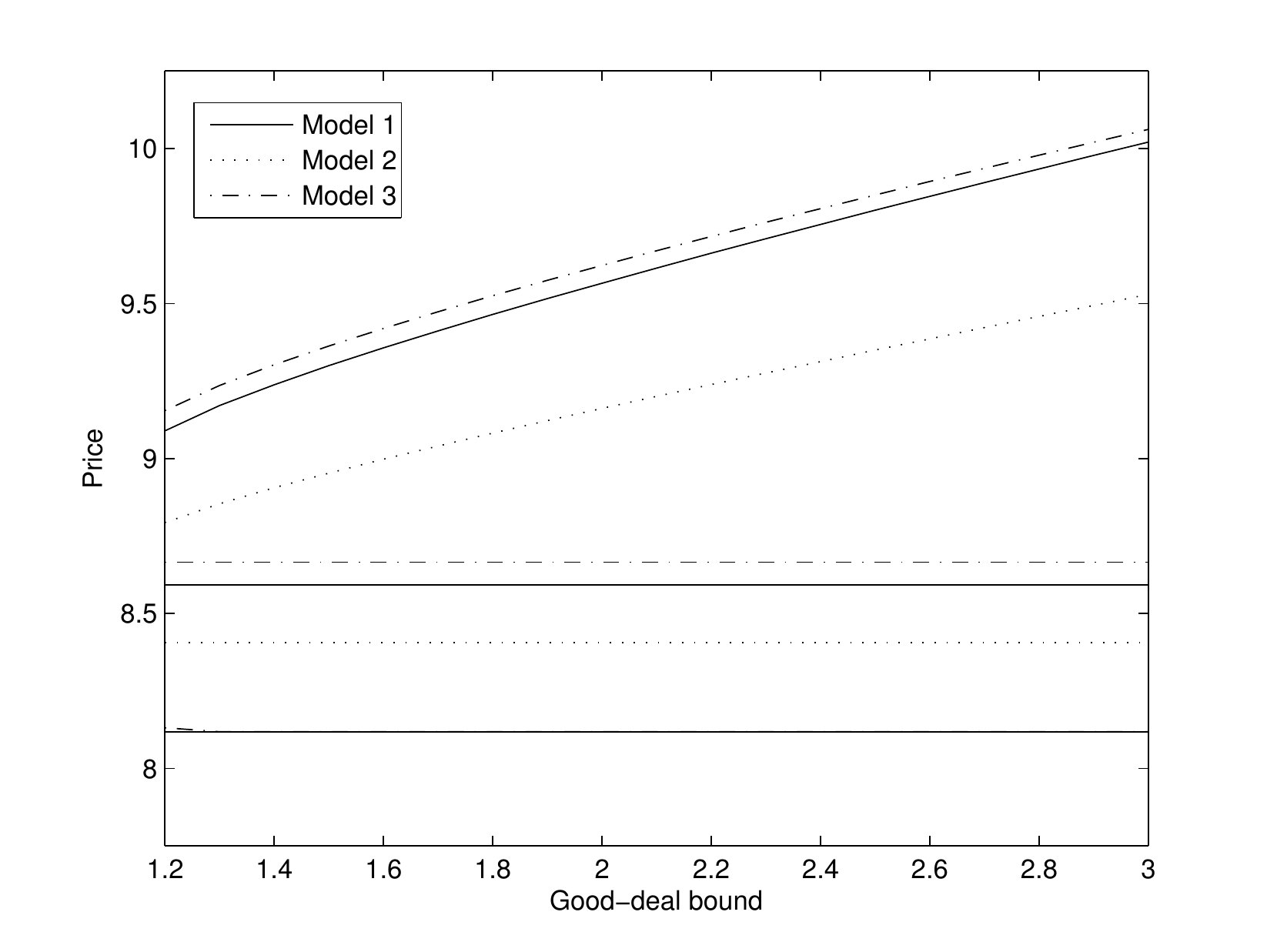}
}
\hspace{0.1cm}%0.1cm
\subfigure[Starting in regime 2.]
{ \label{FIGr2m123}
\includegraphics[scale=0.65]{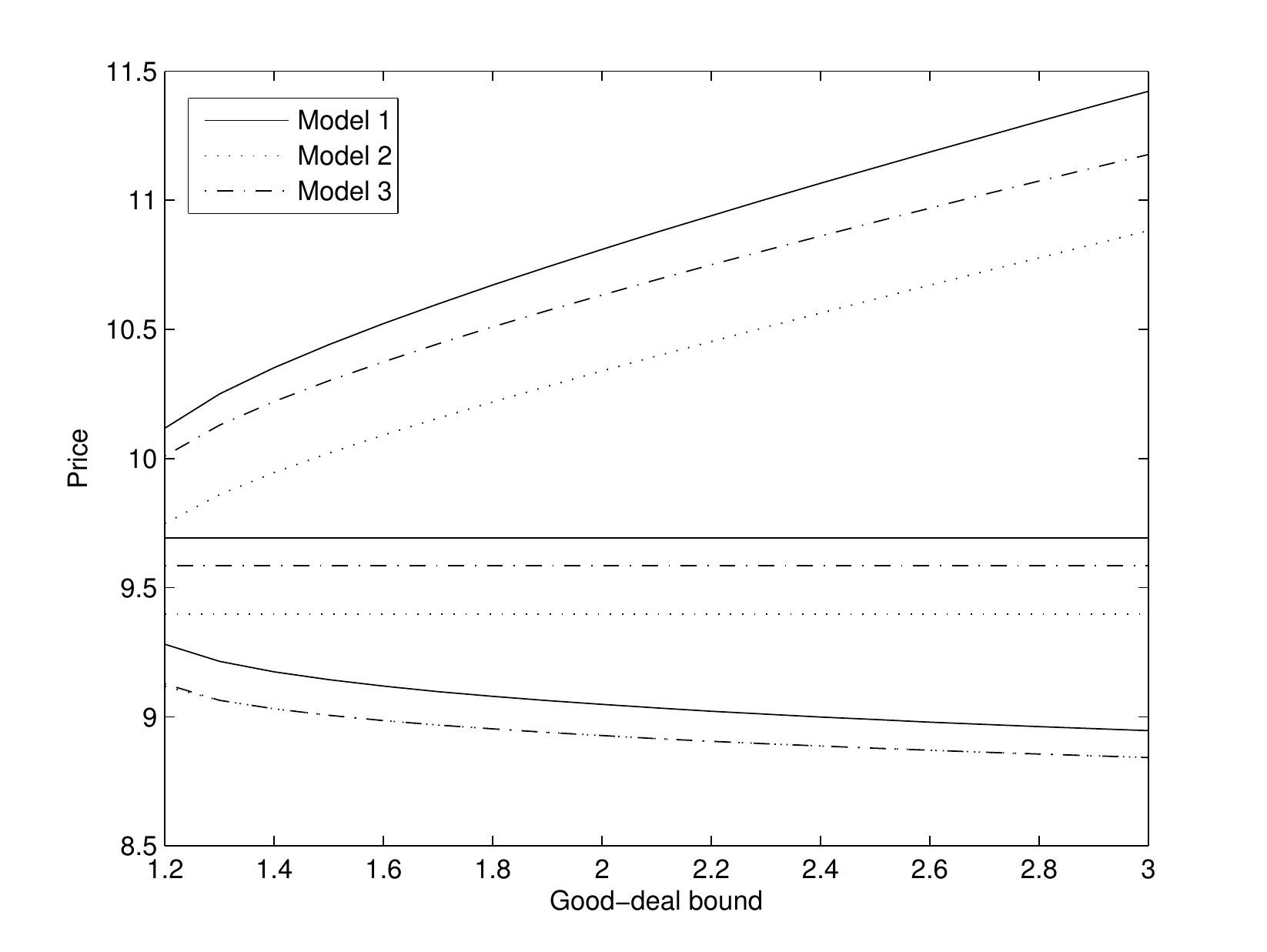}
}
\caption{The upper and lower \gd pricing bounds plotted against the \gd bound for a $1$-year European call option with strike price $\strikeprice:=100$.  The bounds are calculated for three market models with identical market parameters (given in Table \ref{TABmktparam}) but with different generators $\Markovgenerator$ of the Markov chain (with diagonal elements given in Table \ref{TABmodelgens}).  The top plot assumes that the market is in regime 1 at time 0 and the bottom plot assumes that the market is in regime 2 at time 0.  For each model, the minimal martingale measure price is the horizontal middle line.}
\label{FIGmodelbounds}
\end{figure}
The plots show that the \gd pricing bounds are sensitive to the choice of the generator $\Markovgenerator$ of the Markov chain.  Roughly, the upper \gd pricing bounds move in tandem with the minimal martingale measure prices.  In Figure \ref{FIGr2m123}, the lower \gd pricing bounds behave similarly.  However, in Figure \ref{FIGr1m123}, the lower \gd pricing bounds are nearly all constant.  The explanation for the constant lower pricing bounds is as before: the solution to the static optimization problem for the lower \gd function is $\Qmarkovgeneratoropt_{12}^{\textrm{low}} (t, x) = -1$ when starting in regime 1 in these cases.  In Figure \ref{FIGr1m123} we see that lower \gd pricing bound is slightly higher at the \gd bound $\gooddealbnd=1.2$ because here the solution $\Qmarkovgeneratoropt_{12}^{\textrm{low}} (t, x)$ to the static optimization problem for the lower \gd function is just above $-1$.

\section{Conclusion} \label{SECconclude}
We have applied the \gd bound idea of \cite{cochranesaa00.article} to a regime-switching market using the approach of \citet{bjorkslinko06.article} and illustrated it with a numerical example.  The \gd bound idea is a way to measure the uncertainty in the choice of the risk-neutral measure used to price derivatives.  However, as our numerical example demonstrates, the \gd pricing bounds change when the model changes, even though the \gd bound remains constant.  Thus the \gd pricing bounds are sensitive to the choice of model.  It would be interesting to do a wider investigation of the variation of the \gd pricing bounds over a wider class of models for different derivatives.

We have looked at pricing, but what is a ``\gd hedging'' strategy?   As \citet{bjorkslinko06.article} say in their conclusion, this is a highly challenging open problem.

\section*{Acknowledgments}
The author thanks RiskLab, ETH Zurich, Switzerland for financial support.  The author is grateful to Tomas Bj{\"o}rk for his sound advice and his correction of some errors in an early draft of this paper.  Paul Embrechts, Marius Hofert and an anonymous referee made valuable comments on the presentation of the paper.  The author thanks Michel Baes for a helpful discussion.

\bibliographystyle{plainnat}
\bibliography{article}

\end{document}